\newtheorem{theorem}{Theorem}
\newtheorem{proposition}[theorem]{Proposition}
\begin{document}

\title{A Data-Driven Pool Strategy for Price-Makers Under Imperfect Information}

\author{Kedi~Zheng\orcidlink{0000-0001-5407-2892},~\IEEEmembership{Member,~IEEE,}
	    Hongye~Guo\orcidlink{0000-0002-3741-4288},~\IEEEmembership{Member,~IEEE,}
        and~Qixin~Chen\orcidlink{0000-0002-3733-8641},~\IEEEmembership{Senior~Member,~IEEE}
\thanks{Manuscript received June 26, 2021; revised Jan 12, 2022; accepted April 10, 2022. Paper no. TPWRS-01000-2021. This work was supported in part by the Major Smart Grid Joint Project of National Natural Science Foundation of China and State Grid under Grant U2066205 and in part by Natural and Science Foundation of China under Grant 52107102.  (\textit{Corresponding author: Qixin Chen}).}
\thanks{K. Zheng, H. Guo, and Q. Chen are with the State Key Laboratory of Power Systems, Department of Electrical Engineering, Tsinghua University, Beijing 100084 (e-mail: qxchen@tsinghua.edu.cn).}
\thanks{Digital Object Identifier \href{https://doi.org/10.1109/TPWRS.2022.3167096}{10.1109/TPWRS.2022.3167096}}
}

\markboth{MANUSCRIPT FOR IEEE TRANSACTIONS, 2022}%
{Shell \MakeLowercase{\textit{et al.}}: Bare Demo of IEEEtran.cls for IEEE Journals}
%



\maketitle

\IEEEpubidadjcol

\IEEEpubid{\begin{minipage}{\textwidth}\ \\[12pt] \centering
		© 2023 IEEE.  Personal use of this material is permitted.  Permission from IEEE must be obtained for all other uses, in any current or future media, including reprinting/republishing this material for advertising or promotional purposes, creating new collective works, for resale or redistribution to servers or lists, or reuse of any copyrighted component of this work in other works.
\end{minipage}}

\begin{abstract}
This paper studies the pool strategy for price-makers under imperfect information. In this occasion, market participants cannot obtain essential transmission parameters of the power system. Thus, price-makers should estimate the market results with respect to their offer curves using available historical information. The linear programming model of economic dispatch is analyzed with the theory of rim multi-parametric linear programming (rim-MPLP). The characteristics of system patterns (combinations of status flags for generating units and transmission lines) are revealed. A multi-class classification model based on support vector machine (SVM) is trained to map the offer curves to system patterns, which is then integrated into the decision framework of the price-maker. The performance of the proposed method is validated on the IEEE 30-bus system, Illinois synthetic 200-bus system, and South Carolina synthetic 500-bus system.
\end{abstract}

\begin{IEEEkeywords}
Imperfect information, electricity market, parametric programming, probability estimation
\end{IEEEkeywords}

\IEEEpeerreviewmaketitle

\section*{Nomenclature}

\IEEEpubidadjcol

\addcontentsline{toc}{section}{Nomenclature}

Bold symbols denote vectors or matrices. 

\noindent \textit{Sets and Indices}
\begin{IEEEdescription}
	\item[$ b $] Index of blocks.
	\item[$ \mathfrak{b} $] Feasible space for quadratic form bid.
	\item[$ \mathcal{B} $] Set of binding constraint indices.
	\item[$ \mathcal{C} $] Feasible space for block form bid.
	\item[$ \mathcal{G} $] Set of generators owned by a GenCo.
	\item[$ g $] Index for generators.
	\item[$ i,j $] General Indices.
	\item[$ k $] Index of critical regions.
	\item[$ \mathcal{L} $] Feasible space for load.
	\item[$\mathcal{N}$] Set of nodes.
	\item[$ \mathcal{R} $] Critical Region.
	\item[$ \mathcal{X}_{\mathcal{G}} $] Feasible space for the GenCo's decision variables.
\end{IEEEdescription}

\noindent \textit{Variables, Parameters and Functions}
\begin{IEEEdescription}[\IEEEusemathlabelsep\IEEEsetlabelwidth{$ \boldsymbol{P}_G, \boldsymbol{P}_D $}]
	\item[$ a_i, b_i $] Coefficients of the quadratic cost curve at node $ i $.
	\item[$ B $] The number of blocks.
	\item[$ c_{i,b} $] The block price of node $ i $ and block $ b $.
	\item[$ \mathbb{E}_{\boldsymbol{p}} \left( \cdot \right) $] The expected value under probability $ \boldsymbol{p} $.
	\item[$ f $] The decision value in classification.
	\item[$ F^{+/-}_{j} $] Transmission capacity of line $ j $.
	\item[$ \boldsymbol{G} $] Left-hand side coefficients in OPF.
	\item[$ h_g $] The cost function of generator $ g $. 
	\item[$ \boldsymbol{H} $] The diagonal matrix of $ a_i $.
	\item[$ J $] The number of transmission lines.
	\item[$ \boldsymbol{L} $] Load.
	\item[$ N $] The number of nodes.
	\item[$ \boldsymbol{P}_G $] Generated power. 
	\item[$ \boldsymbol{p} $] Posterior probabilities of critical regions.
	\item[$ \boldsymbol{q} $] Cleared generation volume.
	\item[$ \boldsymbol{Q} $] Ancillary matrix in pairwise probability conversion.
	\item[$ Q^{U/L} $] Maximum/minimum generation volume.
	\item[$ r $] Pairwise class probability.
	\item[$ W, \boldsymbol{S} $] Right-hand side coefficients in OPF.
	\item[$ \boldsymbol{w}, \rho $] Parameters of SVM. 
	\item[$ \boldsymbol{x} $] Decision variables of generators.
	\item[$ \boldsymbol{X} $] Input features of the classifier.
	\item[$ y $] Indicator for the class labels.
	\item[$ z $] Revenue of the GenCo.
	\item[$ \beta_{ij} $] Power transmission distribution factor.
	\item[$ \lambda $] The dual variable of the power balance constraint.
	\item[$ \mu^{+/-}  $] The dual variables of the congestion constraints.
	\item[$ \sigma^{U/L} $] The dual variables of the generation capacity constraints.
	\item[$ \boldsymbol{\pi} $] Nodal prices.
	\item[$ \boldsymbol{\Lambda} $] The general form of dual variables in OPF.
	\item[$ \boldsymbol{\varphi}, \boldsymbol{\psi} $] Linear affine functions.
\end{IEEEdescription}

\section{Introduction}

\IEEEpubidadjcol

\IEEEPARstart{P}{rice-makers} are a type of market participants whose market behaviors have a non-negligible impact on market outcomes. In electricity markets, they usually choose to bid strategically and their behaviors are studied in optimization-based frameworks ever since the beginning of market deregulation~\cite{david2000strategic,LI20114686}. In their strategies, both clearing prices and capacity should be considered as endogenous variables formed by the offer curves~\cite{tao2005strategic}, and their estimation of the potential market structure as well as the supply and demand situation underlies such endogenous formation. The modeling of price-makers' pool strategies is beneficial for both market participants and market organizers, as it helps in market simulation, profitability estimation, and market power evaluation.

Currently, the spot electricity market is usually cleared using economic dispatch (ED) and locational marginal price (LMP) scheme~\cite{Litvinov2010lmp}. The system operator collects the offers and the bids from market participants, constructs the ED optimization problem with some constraints and a welfare maximization objective, and then solves the problem ahead of the trading time. The clearing prices are calculated with the dual variables of the energy balance and security constraints (\textit{a.k.a.} the shadow prices) in the ED problem. To ensure solvability and optimality, the objective function is assumed to be convex (linear or quadratic), and the constraints are assumed to be linear. Since essential transmission parameters are considered to be static during a certain time period, the inputs of the ED problem are mainly load forecasts and market bids and offers. 

Market information is very important for the decision of pool strategies for price-makers. Under perfect information, market participants have access to all market data and transmission parameters. Thus, they can construct the precise mathematical model of the ED problem and integrate it into their decision model to form a bilevel optimization problem~\cite{Hobbs2000strategic}. However, when the information is imperfect, the market participants may not have access to transmission parameters and some market data. Their decision model should rely on the estimated or inferred market model from available market data.

Existing research in the field of price-makers' pool strategies falls into two main categories: residual demand curve (RDC)-based methods~\cite{conejo2002optimal,song2018price,xu2011bidding,portela2017residual} and mathematical program with equilibrium constraints (MPEC)-based methods~\cite{Hobbs2000strategic,ruiz2009pool,zugno2013pool,ruiz2014tutorial,ghamkhari2017strategic}. 

The RDC (\textit{a.k.a.} price-quota curve) of a certain generation company (GenCo) is defined as a decreasing function representing its price-quota dependecy~\cite{conejo2002optimal}, and it can be derived from the aggregate supply curve and the aggregate demand curve in the market pool. In its original form, RDC only takes the market supply and demand situation into consideration and ignores the impact resulting from transmission congestion. 
In~\cite{conejo2002optimal}, RDC was integrated into the constraints of GenCo' decision model. Since RDC is stepwise which makes the optimization problem nonlinear, a coordination-descent technique coupled with mixed-integer linear programming (MILP) techniques was proposed. 
In~\cite{song2018price}, the price-maker bidding problem of a retailer with flexible demands was modeled with RDC and solved with MILP techniques. Xu \textit{et al}. extended RDC to apply in occasions when transmission constraints are binding and proposed the concept of transmission-constrained residual demand derivative (TCRDD)~\cite{xu2011bidding}. TCRDD is regarded as sensitivity analysis of the optimal power flow (OPF)\footnote{Since DC-OPF is widely used in electricity markets, the abbreviation OPF stands for DC-OPF rather than AC-OPF in this paper.} model and calculated \textit{ex post} in OPF solution. Gonz\'{a}lez \textit{et al.} extended RDC to consider complex offering conditions in the Iberian electricity market~\cite{portela2017residual}. 

The MPEC-based methods nest the Karush-Kuhn-Tucker (KKT) conditions of OPF in the GenCo's profit maximization problem. By linearizing the product term using the strong duality condition and the complementarity conditions using the big-M method, the MPEC model can be transformed into an MILP problem~\cite{ruiz2009pool}. 
In~\cite{Hobbs2000strategic}, a penalty interior point algorithm was used to compute a local optimal solution of the MPEC. In~\cite{ruiz2009pool}, uncertainties of consumers' bids and rival producers' offers are modeled in a scenario-based approach in MPEC and solved using MILP. In~\cite{zugno2013pool}, MPEC was used to model the pool strategy of a price-maker wind power producer in the balancing market. 
In~\cite{ye2016mpec}, the impact of energy storage on GenCo's strategic behaviors was modeled with MPEC. The market power of large GenCos in imperfect electricity market can be limited by the energy storage system.
Ruiz \textit{et al.} provided a tutorial review on the application of MPEC-related complementarity models in energy market decision problems~\cite{ruiz2014tutorial}. In~\cite{ghamkhari2017strategic}, the problem of high computational burden in the traditional MILP formulation of MPEC was discussed. A convex relaxation method based on Schmudgen's Positivstellensatz Theorem in semialgebraic
geometry is proposed to obtain efficient solutions. 
In~\cite{sepetanc2020cluster}, the optimal bidding and scheduling of aggregated battery swapping stations was studied using MPEC. 

For the RDC-based methods, the main challenge is the modeling of price-makers' influence on the binding constraints of OPF, as RDC and TCRDD are constructed under fixed binding constraints. Although MPEC-based methods have a well-established and solid mathematical foundation and can precisely model price-makers' influence on market results, they are applicable only if perfect market information is available.

Several existing articles~\cite{wen2001optimal,li2005strategic,bompard2021market} discussed the strategic bidding problem of price-maker generators under imperfect information. In~\cite{wen2001optimal}, the bidding problem was formulated in an RDC-like approach where the supply functions of generators were modeled with linear functions and transmission capacity constraints were ignored. It was assumed that each generator did not know the supply functions of its rivals. In~\cite{li2005strategic}, the GenCo's decision problem was modeled using a bilevel problem, and the lower level problem was OPF. Two types of cost coefficients were used to model the unknown supply functions of GenCo's rivals. In~\cite{bompard2021market}, the market equilibrium for a sequential game of a bilateral energy market was analyzed assuming each player had no access to all action of other players that moved before him at every state of the game. 

It is shown that existing discussion on bidding strategy under imperfect information mainly focuses on the modeling of rivals' unknown supply functions, while bidding strategy in absence of transmission topology and parameters is still a problem.

Thus, this paper tries to propose a data-driven pool strategy for price-makers under imperfect information, which makes it more applicable in real markets. The OPF problem is analyzed and modeled using multi-parametric linear programming (MPLP) theory (rim-MPLP, more specifically). The impact of GenCo' bids on the binding constraints is learned using a multi-class support vector machine (SVM) classifier. The classifier is then integrated into the GenCo's decision framework which can be solved using gradient descent methods. The proposed method solves the decision problem in an approach similar to the MPEC-based methods. The major difference is that the proposed method uses a data-driven model instead of the precise OPF in MPEC to model the market outcomes.

The idea of using MPLP theory in the bilevel market decision problem is inspired by reference~\cite{faisca2007parametric,bylling2020parametric}. In~\cite{faisca2007parametric}, it is shown that various classes of bilevel problems can be transformed into a set of independent single level problems using MPLP theory. Such transformation helps in obtaining global optimality. 
In~\cite{bylling2020parametric}, linearly constrained bilevel programming problems with lower-level primal and dual optimal solutions in the upper-level objective function (BPP-Ds) are specifically considered. When the parametric function of the lower level problem can be obtained, BPP-Ds can be solved efficiently. Cases of generation and transmission investment are also performed. 
However, obtaining the parametric functions of the lower level problem also requires a precise OPF model, which limits the direct application of MPLP under imperfect information. 

An important concept in MPLP theory is the critical region (CR) which is also referred to as the system pattern or system pattern region (SPR) when analyzing the OPF problem. Each CR or system pattern corresponds to a unique set of binding constraints in OPF, \textit{i.e.}, combination of status flags for generating units and transmission lines. The shadow prices and optimal solutions of OPF are linear-affine functions within each CR. More theoretical analysis on CR will be provided in Section~\ref{sec:mplp}. Readers can also refer to~\cite{zhou2011short,geng2017learning,zheng2021unsupervised}.

Compared with the aforementioned research, this paper makes the following contributions:
\begin{enumerate}
	\item Electricity markets' basic OPF models are analyzed using rim-MPLP with generation bids and nodal load as parameters of the linear programming problem. The characteristics of critical regions and parametric functions within each CR are revealed. 
	\item An integratable model based on machine learning (specifically a multi-class classifier) is proposed to learn and utilize the probabilistic map between the price-maker's bid and system patterns from available market data. The model can be solved using gradient descent methods and can help the price-maker decide its optimal bidding curve under imperfect information.
	\item Numerical experiments based on the IEEE 30-bus system, Illinois synthetic 200-bus system, and South Carolina synthetic 500-bus system are conducted to demonstrate the performance of the proposed method.
\end{enumerate}

The rest of this paper is structured as follows. Section~\ref{sec:mplp} presents theoretical analysis on OPF with rim MPLP. Section~\ref{sec:learning} details the learning of system patterns using multi-class classification models. Section~\ref{sec:methodology} gives the formulation of the price-maker's optimization problem and its solution method. The case study is conducted in Section~\ref{sec:case}. Finally, Section~\ref{sec:conclusion} draws the conclusion.


\section{Rim-MPLP Analysis for OPF}
\label{sec:mplp}

We consider two typical ways of parameterizing the bidding curve of a generator: the block form and the quadratic form. The OPF problem\footnote{The unit for capacity is assumed to be MW, and the unit for payment and cost is \$, by default.} for the block form can be formulated as follows:

\begin{equation} 
\label{equ:OPF-block}
\begin{aligned}
& \min_{\boldsymbol{P}_G} \sum_{i\in \mathcal{N}} \sum_{b = 1}^{B} c_{i,b} P_{G,i,b}  &  & \\
\mathrm{s.t.} & \sum_{i\in \mathcal{N}} \sum_{b=1}^{B} P_{G,i,b} = \sum_{i\in \mathcal{N}} L_{i} & :\lambda \\  
\\
& \sum_{i\in \mathcal{N}} \beta_{ij} (\sum_{b=1}^{B} P_{G,i,b} - L_{i}) \le F_j^+ & :\mu_j^+ \\
& -\sum_{i\in \mathcal{N}} \beta_{ij} (\sum_{b=1}^{B} P_{G,i,b} - L_{i}) \le F_j^-  & :\mu_j^- \\
& \mathrm{for }\ j = 1,\cdots, J,  \ \mathrm{ and}   & \\ \\
& P_{G,i,b} \le Q_{i,b}^U  &  :\sigma_{i,b}^U \\ 
& -P_{G,i,b} \le -Q_{i,b}^L  & :\sigma_{i,b}^L \\  
& \mathrm{for }\ i \in \mathcal{N}, \ b=1,\cdots,B     & \\
\end{aligned}
\end{equation}
$ \mathcal{N} $ denotes the set for nodes, and $ B $ denotes the number of blocks. For each $ i $, $ c_{i,b} $ is the block price submitted by the generator and satisfies $ c_{i,1} < c_{i,2} < \cdots < c_{i,B} $. $ P_{G,i,b} $ is the generated power for node $ i $ and block $ b $. It is assumed that each node has at most one generator. $ L_i $ is the nodal load, and $ F_j^{+/-} $ denotes the transmission capacity. $ J $ is the number of lines, and $ \beta_{ij} $ is the power transmission distribution factor. $ Q^{U/L} $ denotes the maximum/minimum generation volume. $ \lambda $, $ \mu $, and $ \sigma $ are dual variables. 

Also, the quadratic form is formulated as:

\begin{equation} \label{equ:OPF-quadratic}
\begin{aligned}
& \min_{\boldsymbol{P}_G} \sum_{i\in \mathcal{N}} \frac{1}{2} a_{i} P_{G,i}^2 + b_i P_{G,i}  &  & \\
\mathrm{s.t.} & \sum_{i\in \mathcal{N}} P_{G,i} = \sum_{i\in \mathcal{N}} L_{i} & :\lambda \\  
\\
& \sum_{i\in \mathcal{N}} \beta_{ij} (P_{G,i} - L_{i}) \le F_j^+ & :\mu_j^+ \\
& -\sum_{i\in \mathcal{N}} \beta_{ij} (P_{G,i} - L_{i}) \le F_j^-  & :\mu_j^- \\
& \mathrm{for }\ j = 1,\cdots, J,  \ \mathrm{ and}   & \\ \\
& P_{G,i} \le Q_{i}^U  &  :\sigma_i^U \\ 
& -P_{G,i} \le -Q_{i}^L  & :\sigma_i^L \\  
& \mathrm{for }\ i \in \mathcal{N}     & \\
\end{aligned}
\end{equation}
where $ a_i $ and $ b_i $ are the coefficients of the quadratic cost curve submitted by the generator at node $ i $. 

To summarize, the OPF problems in (\ref{equ:OPF-block}) and (\ref{equ:OPF-quadratic}) can be transformed into the following compact form:
\begin{equation}
\label{equ:mplp}
\begin{aligned}
& \min_{\boldsymbol{P}_G} \boldsymbol{c}^\top \boldsymbol{P}_G & \\ 
\mathrm{s.t.} & \ \boldsymbol{G}_1 \boldsymbol{P}_G = W_1 + \boldsymbol{S}_1 \boldsymbol{L} & : \Lambda_1 \\
\\
& \boldsymbol{G}_i \boldsymbol{P}_G \le W_i + \boldsymbol{S}_i \boldsymbol{L}  & : \Lambda_i \\
& \mathrm{for } \ i = 2, \cdots, 1+2J+2N\times B & \\
\end{aligned}
\end{equation}
for the block form, where $ N = |\mathcal{N}| $. And:
\begin{equation}
\label{equ:mpqp}
\begin{aligned}
& \min_{\boldsymbol{P}_G} \frac{1}{2} \boldsymbol{P}_G^\top \boldsymbol{H} \boldsymbol{P}_G +  \boldsymbol{b}^\top \boldsymbol{P}_G & \\ 
\mathrm{s.t.} & \ \boldsymbol{G}_1 \boldsymbol{P}_G = W_1 + \boldsymbol{S}_1 \boldsymbol{L} & : \Lambda_1 \\
\\
& \boldsymbol{G}_i \boldsymbol{P}_G \le W_i + \boldsymbol{S}_i \boldsymbol{L}  & : \Lambda_i \\
& \mathrm{for } \ i = 2, \cdots, 1+2J+2N & \\
\end{aligned}
\end{equation}
for the quadratic form. In a typical bidding problem of (\ref{equ:OPF-block}), $ c_{i,b} $ and $ Q_{i,b}^U $ are decided by generator $ i $, and $ Q_{i,b}^{L} = 0 $. In (\ref{equ:OPF-quadratic}), $ a_i $ is usually fixed, and $ b_i $ is decided by generator $ i $. 

Assume that $ \beta_{ij} $ and $ F_j^{+/-} $ are fixed, so the parameters in~(\ref{equ:mplp}) and (\ref{equ:mpqp}) are limited to $ \boldsymbol{c} $ and $ \boldsymbol{b} $ in the objective function coefficients (OFC) as well as $ W_i $ and $ \boldsymbol{L} $ in the right-hand side (RHS). The rim-MPLP is defined as a parametric linear programming problem with parameters in both OFC and RHS (Chapter 7 of~\cite{gal2010postoptimal}). When $ Q_{i,b}^U $ is fixed\footnote{It is easy to consider the variation of $ Q_{i,b}^U $ by regarding it as part of the RHS vector $ \boldsymbol{L} $ in (\ref{equ:mplp}) and (\ref{equ:mpqp}).} (and $ W_i $ is consequently fixed), we have the following proposition for the block form. 

\begin{proposition} \label{proposition:rim-mplp}
	The feasible parameter space $ \mathcal{C}\times \mathcal{L} $ for~(\ref{equ:mplp}) is covered by convex polytopes, \textit{a.k.a.} critical regions $ \mathcal{R}_k $, $ k=1,2,\cdots,K $. (i) Within each critical region, the system primal and dual optimal variable solutions can be expressed as \textbf{linear-affine} functions of the parameter vector $ \left(\boldsymbol{c},\boldsymbol{L}\right) $ if the problem is non-degenerate. (ii) The interior of each critical region corresponds to a unique set of binding constraints (system pattern). 
\end{proposition}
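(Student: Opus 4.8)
The plan is to argue by the standard optimal active-set argument of multi-parametric linear programming, specialized to the rim structure of~(\ref{equ:mplp}). First I would fix an arbitrary feasible parameter point $(\boldsymbol{c}_0,\boldsymbol{L}_0)\in\mathcal{C}\times\mathcal{L}$, solve the LP, and record the index set $\mathcal{B}$ of binding constraints, namely the equality constraint (index $1$) together with those inequalities that are tight at the optimum. Since $\boldsymbol{P}_G$ has dimension $N\times B$, non-degeneracy forces exactly $N\times B$ binding constraints whose stacked gradient matrix $\boldsymbol{G}_{\mathcal{B}}$ is square and invertible. Stacking the binding rows gives $\boldsymbol{G}_{\mathcal{B}}\boldsymbol{P}_G=\boldsymbol{W}_{\mathcal{B}}+\boldsymbol{S}_{\mathcal{B}}\boldsymbol{L}$, while complementary slackness sets the non-binding multipliers to zero and stationarity relates $\boldsymbol{c}$ linearly to the binding multipliers. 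Hence
\begin{equation}
\boldsymbol{P}_G^{\star}=\boldsymbol{G}_{\mathcal{B}}^{-1}\left(\boldsymbol{W}_{\mathcal{B}}+\boldsymbol{S}_{\mathcal{B}}\boldsymbol{L}\right),
\qquad
\boldsymbol{\Lambda}_{\mathcal{B}}=\left(\boldsymbol{G}_{\mathcal{B}}^{\top}\right)^{-1}\boldsymbol{c},
\end{equation}
with $\Lambda_i=0$ for $i\notin\mathcal{B}$. This establishes claim~(i): both optima are linear-affine in the combined parameter $(\boldsymbol{c},\boldsymbol{L})$, the primal part being carried by $\boldsymbol{L}$ and the dual part by $\boldsymbol{c}$.

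Next I would characterize the critical region $\mathcal{R}_k$ attached to $\mathcal{B}$ as the set of parameters for which this same binding set stays optimal. By LP duality this reduces to two families of conditions. Primal feasibility of the slack inequalities, $\boldsymbol{G}_i\boldsymbol{P}_G^{\star}\le W_i+\boldsymbol{S}_i\boldsymbol{L}$ for $i\notin\mathcal{B}$, becomes a finite system of linear inequalities in $\boldsymbol{L}$ after substituting the affine $\boldsymbol{P}_G^{\star}(\boldsymbol{L})$; dual feasibility, $\Lambda_i\ge 0$ over the binding inequality indices (the equality multiplier being free), becomes a finite system of linear inequalities in $\boldsymbol{c}$ after substituting $\boldsymbol{\Lambda}_{\mathcal{B}}(\boldsymbol{c})$. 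Intersecting these half-spaces with the defining constraints of $\mathcal{C}\times\mathcal{L}$ exhibits $\mathcal{R}_k$ as a convex polytope. Repeating this over every binding set that arises as $(\boldsymbol{c},\boldsymbol{L})$ ranges over the feasible space yields a finite family $\{\mathcal{R}_k\}$ whose union is all of $\mathcal{C}\times\mathcal{L}$, since every feasible point admits an optimizer and therefore some binding set.

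For claim~(ii) I would show that interiors of distinct regions are disjoint and that each interior point carries a unique binding set. If a point in the interior of $\mathcal{R}_k$ admitted a second optimal binding set, then either its primal optimizer would be a degenerate vertex (more than $N\times B$ tight constraints) or its dual optimizer would fail to be unique; both are ruled out by the non-degeneracy hypothesis. Equivalently, such a coincidence forces one of the defining inequalities of $\mathcal{R}_k$ to hold with equality, placing the point on the boundary rather than the interior. Hence the interior of each $\mathcal{R}_k$ corresponds to exactly one set of binding constraints, i.e.\ one system pattern.

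I expect the main obstacle to be the bookkeeping that guarantees $\boldsymbol{G}_{\mathcal{B}}$ is square and invertible: one must verify that non-degeneracy delivers precisely $N\times B$ linearly independent binding rows (the always-active equality plus $N\times B-1$ active inequalities), and that the rim parameterization decouples cleanly, with $\boldsymbol{L}$ controlling primal feasibility and $\boldsymbol{c}$ controlling dual feasibility. This decoupling is exactly what makes the two inequality families live in complementary coordinate blocks, so that their intersection is genuinely a polytope in the product space $\mathcal{C}\times\mathcal{L}$ rather than some curved or coupled region.
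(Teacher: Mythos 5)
Your proposal is correct and follows essentially the same active-set/KKT route as the paper: identify the binding set $\mathcal{B}$, solve the binding system for $\boldsymbol{P}_G^*$ (affine in $\boldsymbol{L}$) and $\boldsymbol{\Lambda}_\mathcal{B}$ (linear in $\boldsymbol{c}$), then substitute into the slack primal inequalities and the dual-nonnegativity conditions to exhibit each critical region as a convex polytope in $\mathcal{C}\times\mathcal{L}$, with non-degeneracy securing uniqueness of the pattern on interiors. Two small notes: stationarity is $\boldsymbol{c}+\boldsymbol{G}_\mathcal{B}^\top\boldsymbol{\Lambda}_\mathcal{B}=0$, so your dual formula should carry a minus sign, $\boldsymbol{\Lambda}_\mathcal{B}=-\left(\boldsymbol{G}_\mathcal{B}^\top\right)^{-1}\boldsymbol{c}$; and the linear-independence ``obstacle'' you flag at the end is precisely what the paper dispatches by invoking LICQ, which holds naturally for OPF (by citation), and only then uses non-degeneracy to obtain full column rank and hence a square invertible $\boldsymbol{G}_\mathcal{B}$.
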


\begin{proof}
	The KKT first order necessary conditions for~(\ref{equ:mplp}) can be expressed as follows:
	\begin{equation} \label{equ:kkt-mplp}
	\begin{aligned}
	\boldsymbol{c} + \begin{bmatrix}
	\boldsymbol{G}_1 \\ 
	\vdots \\
	\boldsymbol{G}_{1+2J+2N\times B}
	\end{bmatrix}^\top 
	\begin{bmatrix}
	\Lambda_1\\ 
	\vdots \\
	\Lambda_{1+2J+2N\times B} 
	\end{bmatrix} & = 0    \\
	\boldsymbol{G}_1 \boldsymbol{P}_{G} - W_1 - \boldsymbol{S}_1 \boldsymbol{L} & = 0 \\
	\\
	\Lambda_{i} (\boldsymbol{G}_i \boldsymbol{P}_G - W_i - \boldsymbol{S}_i \boldsymbol{L}) & = 0 \\
	\Lambda_{i} & \ge 0 \\ 
	\boldsymbol{G}_i \boldsymbol{P}_G - W_i - \boldsymbol{S}_i \boldsymbol{L}  & \le 0 \\
	\mathrm{for }\ i = 2,\cdots, 1+2J+2N\times B .
	\end{aligned}
	\end{equation}

	Let $ \mathcal{B} $ denote the set of indices corresponding to the binding equality and inequality constraints in~(\ref{equ:kkt-mplp}). Let $ \boldsymbol{G}_\mathcal{B} $, $ \boldsymbol{W}_\mathcal{B} $, and $ \boldsymbol{S}_\mathcal{B} $ represent the matrices corresponding to $ \mathcal{B} $, and $ \boldsymbol{\Lambda}_\mathcal{B} $ denote the dual variables. Then we have:
	\begin{equation} \label{equ:kkt-mplp-b}
	\begin{gathered}
	\boldsymbol{c} + \boldsymbol{G}_\mathcal{B}^\top  \boldsymbol{\Lambda}_\mathcal{B} = 0 \\
	\boldsymbol{G}_\mathcal{B} \boldsymbol{P}_G - \boldsymbol{W}_\mathcal{B} - \boldsymbol{S}_\mathcal{B} \boldsymbol{L} = 0 
	\end{gathered}
	\end{equation}
	There is a condition for such situation when these binding constraints are linearly independent, named \textit{linear independence constraint qualification (LICQ)}. LICQ holds if $ \boldsymbol{G}_\mathcal{B} $ has full row rank. In the case of OPF, LICQ naturally holds, which is proved in~\cite{zhou2011short}. 
	Then, $ \boldsymbol{G}_\mathcal{B} \boldsymbol{G}_\mathcal{B}^\top $ is invertible, and:
	\begin{equation} \label{equ:mplp-lambda}
	\left(\boldsymbol{\Lambda}^{\mathcal{B}}\right)^* = - \left( \boldsymbol{G}_\mathcal{B} \boldsymbol{G}_\mathcal{B}^\top \right)^{-1} \boldsymbol{G}_\mathcal{B} \boldsymbol{c}
	\end{equation}
	where the symbol $ (\cdot)^* $ indicates the optimal value. We can see that the dual optimal value is a linear-affine function of $ \boldsymbol{c} $. 
	
	The dimension of $ \boldsymbol{G}_\mathcal{B} $ is $ |\mathcal{B}| \times NB $. If $ |\mathcal{B}| < NB $, then there are multiple $ \boldsymbol{P}_G $ satisfying (\ref{equ:kkt-mplp-b}), implying that the problem is degenerate. If the problem is non-degenerate, then $ \boldsymbol{G}_\mathcal{B} $ has full column rank. Since $ \boldsymbol{G}_\mathcal{B} $ has both full row rank and column rank, we have $ |\mathcal{B}| = NB $ and $ \boldsymbol{G}_\mathcal{B} $ is invertible. So the optimal primal value:
	\begin{equation} \label{equ:mplp-pg}
	\left( \boldsymbol{P}_G \right)^* = \boldsymbol{G}_\mathcal{B}^{-1} \left( \boldsymbol{W}_\mathcal{B} + \boldsymbol{S}_\mathcal{B} \boldsymbol{L} \right)
	\end{equation} 
	which is a linear-affine function of $ \boldsymbol{L} $. 
	
	By substituting (\ref{equ:mplp-lambda}) and (\ref{equ:mplp-pg}) into the inequalities in~(\ref{equ:kkt-mplp}), we have:
	\begin{equation}
	\begin{gathered}
	\left\{- \left( \boldsymbol{G}_\mathcal{B} \boldsymbol{G}_\mathcal{B}^\top \right)^{-1} \boldsymbol{G}_i \boldsymbol{c}  \right\}_i  \ge 0 \\
	\boldsymbol{G}_i  \boldsymbol{G}_\mathcal{B}^{-1} \left( \boldsymbol{W}_\mathcal{B} + \boldsymbol{S}_\mathcal{B} \boldsymbol{L} \right) - W_i -\boldsymbol{S}_i \boldsymbol{L} \le 0,  \\
	\mathrm{for } \ i \in \mathcal{B} / \{1\}
	\end{gathered}
	\end{equation}
	which is a convex polytope in $ \mathcal{C}\times \mathcal{L} $ corresponding to the set of binding constraints $ \mathcal{B} $. 
\end{proof}

Similarly, the following proposition holds for the quadratic form.

\begin{proposition} \label{proposition:rim-mpqp}
	The feasible parameter space $ \mathfrak{b}\times \mathcal{L} $ for~(\ref{equ:mpqp}) is covered by convex polytopes, \textit{a.k.a.} critical regions $ \mathcal{R}_k $, $ k=1,2,\cdots,K $. (i) Within each critical region, the system primal and dual optimal variable solutions can be expressed as \textbf{linear-affine} functions of the parameter vector $ \left(\boldsymbol{b},\boldsymbol{L}\right) $ if the problem is non-degenerate. (ii) The interior of each critical region corresponds to a unique set of binding constraints (system pattern). T
\end{proposition}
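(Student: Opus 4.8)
The plan is to mirror the proof of Proposition~\ref{proposition:rim-mplp}, modifying only the stationarity condition to account for the quadratic term. First I would write the KKT system for~(\ref{equ:mpqp}); the single structural change from~(\ref{equ:kkt-mplp}) is that the gradient of the Lagrangian now reads $\boldsymbol{H}\boldsymbol{P}_G + \boldsymbol{b} + \boldsymbol{G}^\top\boldsymbol{\Lambda} = 0$, while primal feasibility, dual feasibility $\boldsymbol{\Lambda}_i \ge 0$, and complementary slackness are carried over verbatim. Restricting to the active set $\mathcal{B}$ then gives the two equalities $\boldsymbol{H}\boldsymbol{P}_G + \boldsymbol{b} + \boldsymbol{G}_\mathcal{B}^\top\boldsymbol{\Lambda}_\mathcal{B} = 0$ and $\boldsymbol{G}_\mathcal{B}\boldsymbol{P}_G - \boldsymbol{W}_\mathcal{B} - \boldsymbol{S}_\mathcal{B}\boldsymbol{L} = 0$, the analogues of~(\ref{equ:kkt-mplp-b}).

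Next, since $\boldsymbol{H}$, being the diagonal matrix of the $a_i$, is positive definite, it is invertible, so the stationarity equation yields $\boldsymbol{P}_G = -\boldsymbol{H}^{-1}(\boldsymbol{b} + \boldsymbol{G}_\mathcal{B}^\top\boldsymbol{\Lambda}_\mathcal{B})$. Substituting this into the active constraint equation collapses the system to a single linear equation in $\boldsymbol{\Lambda}_\mathcal{B}$ whose coefficient matrix is the reduced Hessian $\boldsymbol{G}_\mathcal{B}\boldsymbol{H}^{-1}\boldsymbol{G}_\mathcal{B}^\top$. Under LICQ, which holds for OPF by~\cite{zhou2011short} (i.e.\ $\boldsymbol{G}_\mathcal{B}$ has full row rank), this matrix is positive definite and hence invertible, giving $\boldsymbol{\Lambda}_\mathcal{B}^* = -(\boldsymbol{G}_\mathcal{B}\boldsymbol{H}^{-1}\boldsymbol{G}_\mathcal{B}^\top)^{-1}(\boldsymbol{G}_\mathcal{B}\boldsymbol{H}^{-1}\boldsymbol{b} + \boldsymbol{W}_\mathcal{B} + \boldsymbol{S}_\mathcal{B}\boldsymbol{L})$, which is linear-affine in $(\boldsymbol{b},\boldsymbol{L})$. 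Back-substitution then renders $\boldsymbol{P}_G^*$ linear-affine in $(\boldsymbol{b},\boldsymbol{L})$ as well, establishing part~(i). For part~(ii), I would insert these affine expressions into the inactive inequalities $\boldsymbol{G}_i\boldsymbol{P}_G^* - W_i - \boldsymbol{S}_i\boldsymbol{L} \le 0$ and the sign conditions $\boldsymbol{\Lambda}_\mathcal{B}^* \ge 0$; each becomes a linear inequality in $(\boldsymbol{b},\boldsymbol{L})$, so their intersection is a convex polytope in $\mathfrak{b}\times\mathcal{L}$ attached to $\mathcal{B}$, whose interior (where the strict inequalities hold) corresponds to a unique binding set.

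The step I would flag as the main conceptual difference from Proposition~\ref{proposition:rim-mplp} is the role of non-degeneracy. In the linear case one additionally needs $\boldsymbol{G}_\mathcal{B}$ to have full column rank, i.e.\ $|\mathcal{B}| = NB$, to pin down a unique primal optimizer via~(\ref{equ:mplp-pg}). Here the strict convexity supplied by the positive-definite $\boldsymbol{H}$ already guarantees a unique $\boldsymbol{P}_G^*$ for each active set irrespective of $|\mathcal{B}|$, so the object requiring inversion is the reduced Hessian $\boldsymbol{G}_\mathcal{B}\boldsymbol{H}^{-1}\boldsymbol{G}_\mathcal{B}^\top$ rather than $\boldsymbol{G}_\mathcal{B}$ itself. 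Non-degeneracy is therefore invoked only to ensure the dual solution $\boldsymbol{\Lambda}_\mathcal{B}^*$ is unique (LICQ) and that the active set remains constant throughout the interior of the region (strict complementarity). Confirming positive-definiteness of $\boldsymbol{G}_\mathcal{B}\boldsymbol{H}^{-1}\boldsymbol{G}_\mathcal{B}^\top$ from full row rank of $\boldsymbol{G}_\mathcal{B}$ and treating the boundary degeneracies correctly are the only places demanding care; the remainder is a direct transcription of the linear argument.
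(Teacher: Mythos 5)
Your proposal is correct and follows essentially the same route as the paper's proof: write the KKT conditions for~(\ref{equ:mpqp}), restrict to the binding set $\mathcal{B}$, invert the reduced Hessian $\boldsymbol{G}_\mathcal{B}\boldsymbol{H}^{-1}\boldsymbol{G}_\mathcal{B}^\top$ (invertible since $\boldsymbol{H}$ is positive definite and $\boldsymbol{G}_\mathcal{B}$ has full row rank) to obtain the linear-affine expressions~(\ref{equ:mpqp-optimal}), and substitute them back into the inactive inequalities and dual sign conditions to carve out the convex polytope attached to $\mathcal{B}$. Your closing observation --- that non-degeneracy plays a weaker role here than in Proposition~\ref{proposition:rim-mplp} because strict convexity already fixes the primal optimizer, so only dual uniqueness and constancy of the active set are at stake --- is a correct refinement of a point the paper's proof leaves implicit.
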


\begin{proof}
	The KKT first order necessary conditions for~(\ref{equ:mpqp}) can be expressed as follows:
	\begin{equation} \label{equ:kkt-mpqp}
	\begin{aligned}
	\boldsymbol{H}\boldsymbol{P}_G + \boldsymbol{b} + \begin{bmatrix}
	\boldsymbol{G}_1 \\ 
	\vdots \\
	\boldsymbol{G}_{1+2J+2N}
	\end{bmatrix}^\top 
	\begin{bmatrix}
	\Lambda_1\\ 
	\vdots \\
	\Lambda_{1+2J+2N} 
	\end{bmatrix} & = 0    \\
	\boldsymbol{G}_1 \boldsymbol{P}_{G} - W_1 - \boldsymbol{S}_1 \boldsymbol{L} & = 0 \\
	\\
	\Lambda_{i} (\boldsymbol{G}_i \boldsymbol{P}_G - W_i - \boldsymbol{S}_i \boldsymbol{L}) & = 0 \\
	\Lambda_{i} & \ge 0 \\ 
	\boldsymbol{G}_i \boldsymbol{P}_G - W_i - \boldsymbol{S}_i \boldsymbol{L}  & \le 0 \\
	\mathrm{for }\ i = 2,\cdots, 1+2J+2N.
	\end{aligned}
	\end{equation}
	
	Let $ \mathcal{B} $ denote the set of indices corresponding to the binding equality and inequality constraints. We have:
	\begin{equation} \label{equ:kkt-mpqp-b}
	\begin{aligned}
	\boldsymbol{H}\boldsymbol{P}_G + \boldsymbol{b} + \boldsymbol{G}_\mathcal{B}^\top  \boldsymbol{\Lambda}_\mathcal{B} & = 0 \\
	\boldsymbol{G}_\mathcal{B} \boldsymbol{P}_G - \boldsymbol{W}_\mathcal{B} - \boldsymbol{S}_\mathcal{B} \boldsymbol{L} & = 0 
	\end{aligned}
	\end{equation}
	
	Since $ \boldsymbol{H} $ is positive-definite and $ \boldsymbol{G}_\mathcal{B} $ has full row rank, $ \boldsymbol{G}_\mathcal{B} \boldsymbol{H}^{-1} \boldsymbol{G}_\mathcal{B} $ is invertible. The optimal variables can be solved as:
	\begin{equation} \label{equ:mpqp-optimal}
	\begin{aligned}
	\left(\boldsymbol{\Lambda}^{\mathcal{B}}\right)^* = & - \left( \boldsymbol{G}_\mathcal{B} \boldsymbol{H}^{-1} \boldsymbol{G}_\mathcal{B}^\top \right)^{-1} \left( \boldsymbol{G}_\mathcal{B} \boldsymbol{H}^{-1} \boldsymbol{b} + \right. \\ 
	& \left. \boldsymbol{W}_\mathcal{B} + \boldsymbol{S}_\mathcal{B} \boldsymbol{L} \right)  \\ 
	\left( \boldsymbol{P}_G \right)^* =  & - \boldsymbol{H}^{-1} \boldsymbol{b} + \boldsymbol{H}^{-1} \boldsymbol{G}_{\mathcal{B}}^\top \left( \boldsymbol{G}_\mathcal{B} \boldsymbol{H}^{-1} \boldsymbol{G}_\mathcal{B}^\top \right)^{-1} \\
	& \left( \boldsymbol{G}_\mathcal{B} \boldsymbol{H}^{-1} \boldsymbol{b} + \boldsymbol{W}_\mathcal{B} + \boldsymbol{S}_\mathcal{B} \boldsymbol{L} \right)
	\end{aligned}
	\end{equation}
	which is linear-affine functions on $ \mathfrak{b} \times \mathcal{L} $. 
	
	By substituting (\ref{equ:mpqp-optimal}) into the inequalities in (\ref{equ:kkt-mpqp}), we have:
	\begin{equation}
	\begin{gathered}
	\left\{- \left( \boldsymbol{G}_\mathcal{B} \boldsymbol{H}^{-1} \boldsymbol{G}_\mathcal{B}^\top \right)^{-1} \left( \boldsymbol{G}_\mathcal{B} \boldsymbol{H}^{-1} \boldsymbol{b} + \boldsymbol{W}_\mathcal{B} + \boldsymbol{S}_\mathcal{B} \boldsymbol{L} \right) \right\}_i \ge 0  \\ 
	\\
	W_i + \boldsymbol{S}_i \boldsymbol{L} \ge \boldsymbol{G}_i \left(  - \boldsymbol{H}^{-1} \boldsymbol{b} + 
	\boldsymbol{H}^{-1} \boldsymbol{G}_{\mathcal{B}}^\top 
	\right. \\ \left.
	\left( \boldsymbol{G}_\mathcal{B} \boldsymbol{H}^{-1} \boldsymbol{G}_\mathcal{B}^\top \right)^{-1}   
	\left( \boldsymbol{G}_\mathcal{B} \boldsymbol{H}^{-1} \boldsymbol{b} + \boldsymbol{W}_\mathcal{B} + \boldsymbol{S}_\mathcal{B} \boldsymbol{L} \right)  \right) \\
	\mathrm{for } \ i \in \mathcal{B} / \{1\}
	\end{gathered}
	\end{equation}
	which is a convex polytope on $ \mathfrak{b} \times \mathcal{L} $ corresponding to the set of binding constraints $ \mathcal{B} $. 
\end{proof}

Propositions~\ref{proposition:rim-mplp} and~\ref{proposition:rim-mpqp} set up the foundation for learning the parametric functions of $ \boldsymbol{\Gamma}^* $ and $ \boldsymbol{P}_G^* $ under imperfect information. Although market participants may not have access to transmission parameters such as $ \beta_{ij} $ and $ F_{j}^{+/-} $, the map between the parameters of $ (\boldsymbol{c},\boldsymbol{L}) $ or $ (\boldsymbol{b},\boldsymbol{L}) $ and their attribution of CR can be learned using machine learning methods. Also, the linear-affine coefficients and intercepts for each CR can be estimated using linear regression. 

Fig.~\ref{fig:example} gives an example of CRs using the simple three-bus system from~\cite{geng2017learning,zheng2021unsupervised}. The block form is used with $ B = 1 $. $ c_{1,1} $ and $ c_{2,1} $ as well as $ L_2 $ and $ L_3 $ are considered as parameters. To visualize the CRs, randomly generated points of $ (c_{1,1},c_{2,1},L_2,L_3) $ are plotted using multidimensional scaling (MDS) in a three-dimensional space. The CR borders are marked using transparent colored faces. 

The main results in this section are derived from lossless OPF models of (\ref{equ:OPF-block}) and (\ref{equ:OPF-quadratic}). When extending the results to lossy OPF models~\cite{litvinov2004marginal,hu2010iterative}, the characteristics of CR still hold as long as the loss sensitivity vector and loss distribution factor vector are fixed. If the two loss-related vectors
are updated iteratively according to real-time market conditions, there will be overlaps between different CRs. However, such overlaps do not affect the proposed learning method in Section~\ref{sec:learning}. The case study on this will also be given and discussed in Section~\ref{sec:case}. 

\begin{figure}[!t]
	\centering
	\begin{subfigure}[t]{1.0\linewidth}
		\centering\includegraphics[width=1.0\linewidth]{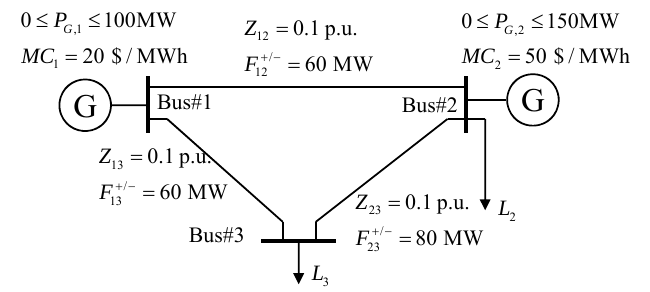}
		\caption{The 3-bus system in~\cite{geng2017learning,zheng2021unsupervised}.}
		\label{subfig:three-bus}
	\end{subfigure}
	\\
	\begin{subfigure}[t]{1.0\linewidth}
		\centering\includegraphics[width=1.0\linewidth]{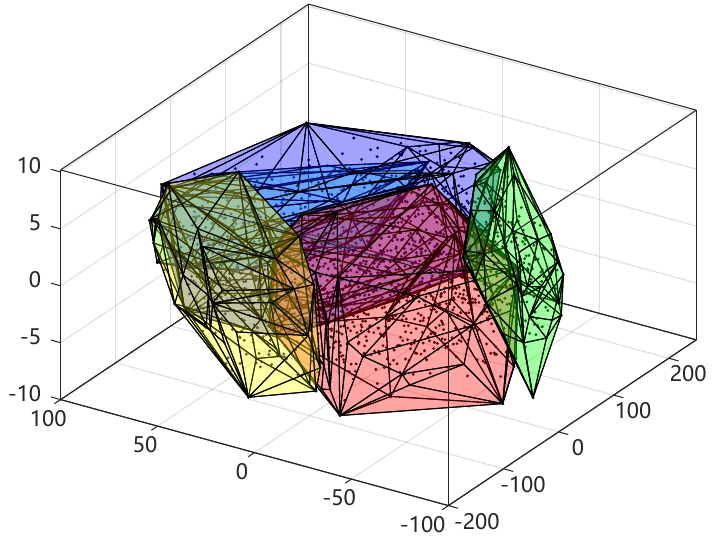}
		\caption{The MDS plot of the four CRs. Axes: reduced dimensional feature space. Dots: feature points. Transparent polytopes: CRs.}
		\label{subfig:CRs}
	\end{subfigure}
	\caption{An example of CRs in a small-sized OPF case.}
	\label{fig:example}
\end{figure}

\section{Learning System Patterns}
\label{sec:learning}

According to the hyperplane separation theorem in geometry (Section 2.5.1~\cite{boyd2004convex}), there exists a separating hyperplane for each pair of two CRs in the high-dimensional Euclidean space, which makes it possible to learn the system patterns from parameters using machine learning-based classifiers in a one-vs-one (OVO) approach. Among these classifiers, we choose SVM because it can fit the hyperplane in a natural way and allow tolerance for CR overlapping by incorporating soft margins. In addition to learning the CRs, the linear-affine coefficients in the parameter functions should also be estimated. 

\subsection{Learning Probabilities of CRs}

For consistency, we use the notation $ \boldsymbol{x} $ to represent the decision variables of the generators, \textit{i.e.}, $ \boldsymbol{c} $ and $ \boldsymbol{Q}^{U} $ in block form and $ \boldsymbol{b} $ in quadratic form. 
Let $ p_k(\boldsymbol{x},\boldsymbol{L}) $ denote the posterior probability of the $ k $-th CR $ \Pr(\mathcal{R}_k | \boldsymbol{x}, \boldsymbol{L}) $. Let $ \boldsymbol{X} = [\boldsymbol{x}; \boldsymbol{L}]  $. 
The input of the OVO classifier is the feature vector $ \boldsymbol{X} $ or part of vector $ \boldsymbol{X} $ (see the discussion in Section~\ref{subsec:information})
The output is a vector $ \boldsymbol{p} = [p_k(\boldsymbol{x},\boldsymbol{L})]_k $ representing the estimated posterior probabilities of CR belonging.

In machine learning, the OVO estimation of multi-class posterior probabilities contains the following steps:
\begin{enumerate}
	\item OVO classification. Obtain $ K \choose 2 $ binary classifers. Each classifier between classes $ i $ and $ j $ has a decision value $ f_{ij} = f_{ij}(\boldsymbol{X}) $. Here, we take SVM as an example, but note that the engine of binary classifiers is not limited to SVM. Let $ t $ denote the index of observations of $ \boldsymbol{X} $ whose CR falls in $ \mathcal{R}_i $ or $ \mathcal{R}_j $. Let $ y^{(t)} $ be the indicator for $ \boldsymbol{X}^{(t)} $ so that $ y^{(t)} = -1 $ if $ \boldsymbol{X}^{(t)} $'s CR is $ \mathcal{R}_i $, and $ y^{(t)} = 1 $ if $ \boldsymbol{X}^{(t)} $'s CR is $ \mathcal{R}_j $. The following convex optimization problem is solved:
	\begin{equation}
	\begin{aligned}
	\min_{\boldsymbol{w}_{ij}, \rho_{ij}, \boldsymbol{s}} \ \ & \frac{1}{2} \boldsymbol{w}_{ij}^\top \boldsymbol{w}_{ij} + C ||\boldsymbol{s}||_{1} \\
	\mathrm{s.t.} \ \ & y^{(t)} (\boldsymbol{w}_{ij}^\top \boldsymbol{X}^{(t)}- \rho_{ij}) \ge 1- s^{(t)} \\
	& s^{(t)} \ge 0, y^{(t)} \in \{-1,1\}
	\end{aligned}
	\end{equation}
	where $ C $ is the tolerance for CR overlapping, and $ ||\cdot||_1 $ denotes the $ \ell_1$ norm. The decision value is calculated:
	\begin{equation}
	f_{ij} = \boldsymbol{w}_{ij}^\top \boldsymbol{X} - \rho_{ij}
	\end{equation}
	\item OVO probability calibration with Platt's algorithm~\cite{platt1999probabilistic} (a.k.a. Platt scaling) which is a widely-used method in probability estimation for binary classifiers in well-known software such as scikit-learn and LIBSVM~\cite{scikit-learn,libsvm}. The algorithm estimates parameters of $ A_{ij} $ and $ B_{ij} $ for the binary classifier between classes $ i $ and $ j $ ($ i<j $). The pairwise class probabilities $ r_{ij} $ are obtained by:
	\begin{equation}
	r_{ij} = \begin{cases}
	\frac{1}{ 1 + e^{A_{ij} f_{ij} + B_{ij} } } & i < j \\
	1- r_{ji} & i > j 
	\end{cases}
	\end{equation}
	\item Converting the pairwise probabilities $ r_{ij} $ to $ p_k $. There are several ways of doing this~\cite{wu2004probability}. We use the second approach proposed by Wu \textit{et al.} which is also the one implemented in LIBSVM. First, obtain the matrix $ \boldsymbol{Q} $ by:
	\begin{equation} \label{equ:solve-Q}
	Q_{ij} = \begin{cases}
	\sum_{s: s\neq i} r_{si}^2 & \mathrm{if}\ i = j, \\
	- r_{ji} r_{ij} & \mathrm{if}\ i\neq j . 
	\end{cases}
	\end{equation}
	Then solve the following quadratic problem and obtain $ \boldsymbol{p} $:
	\begin{equation} \label{equ:solve-p}
	\begin{aligned}
	\min_{\boldsymbol{p}}\ & \frac{1}{2} \boldsymbol{p}^\top \boldsymbol{Q} \boldsymbol{p} \\
	\mathrm{s.t.}\ & \boldsymbol{e}^\top \boldsymbol{p} = 1
	\end{aligned}	
	\end{equation}
	which has the following KKT condition:
	\begin{equation} \label{equ:kkt-pQ}
	\begin{bmatrix}
	\boldsymbol{Q} & \boldsymbol{e} \\
	\boldsymbol{e}^\top & 0
	\end{bmatrix}
	\begin{bmatrix}
	\boldsymbol{p} \\
	\alpha
	\end{bmatrix} = 
	\begin{bmatrix}
	\boldsymbol{0} \\
	1
	\end{bmatrix}
	\end{equation}
	where $ \alpha $ is a scalar. 
\end{enumerate}

\subsection{Learning of Parametric Functions}
Let $ \boldsymbol{\pi} $ denote the nodal prices resulting from OPF, which can be obtained from the shadow prices $ \boldsymbol{\Lambda} $~\cite{litvinov2004marginal,zheng2021unsupervised}. Since $ \boldsymbol{\Lambda} $ are linear-affine functions within each CR, $ \boldsymbol{\pi} $ are also linear-affine functions w.r.t. $ \boldsymbol{X} $ in each CR. 
For a system pattern $ k $, let the following function denote $ \boldsymbol{\pi} $ and the cleared generation volume $ \boldsymbol{q} = \boldsymbol{P}_G^* $:
\begin{equation}
\begin{aligned}
\boldsymbol{\pi} = \boldsymbol{\varphi}^{(k)} (\boldsymbol{x}, \boldsymbol{L}) = \boldsymbol{\varphi}_x^{(k)} \boldsymbol{x} + \boldsymbol{\varphi}_L^{(k)} \boldsymbol{L} + \varphi_0^{(k)}  \\
\boldsymbol{q} = \boldsymbol{\psi}^{(k)} (\boldsymbol{x}, \boldsymbol{L}) =\boldsymbol{\psi}_x^{(k)} \boldsymbol{x} + \boldsymbol{\psi}_L^{(k)} \boldsymbol{L} + \psi_0^{(k)} 
\end{aligned}
\end{equation}
The coefficient matrices $ \boldsymbol{\varphi}_x^{(k)} $, $ \boldsymbol{\varphi}_L^{(k)} $, $ \boldsymbol{\psi}_x^{(k)} $, $ \boldsymbol{\psi}_L^{(k)} $ and intercepts $ \varphi_0^{(k)} $, $ \psi_0^{(k)} $ can be estimated using linear regression given market observations $ \boldsymbol{\pi}^{(t)} $ and $ \boldsymbol{X}^{(t)} $ if $ \boldsymbol{X}^{(t)} \in \mathcal{R}_k $.

\section{Optimization Framework}
\label{sec:methodology}

This section introduces the optimal bidding problem for GenCos as price-makers and provides a discussion on the market information level. It also presents the overall framework for the proposed data-driven pool strategy.

\subsection{GenCo's Optimization Problem}
\label{subsec:optimization}

Assume that, $ K $ classes (of system patterns) are considered in the strategic bidding problem. For a GenCo defined by set $ \mathcal{G} \subset \mathcal{N} $ of generators located at $ g \in \mathcal{G} $, the locational marginal price $ \pi_g $ and the cleared volume $ q_g $ determine its income from market settlement. 

For the generator at $ g $, define its cost as $ h_g(q_g) $. Thus, for the GenCo defined by $ \mathcal{G} $, its optimal offering problem is formulated as (with $ \boldsymbol{x}_{\mathcal{N}/\mathcal{G}} $ and $ \boldsymbol{L} $ fixed):
\begin{equation}
\label{equ:max-revenue}
\begin{aligned}
\max_{\boldsymbol{x}_\mathcal{G}} z & =  \mathbb{E}_{\boldsymbol{p}}\left( \boldsymbol{\pi}_\mathcal{G}^\top \boldsymbol{q}_\mathcal{G} - \sum_{g \in \mathcal{G}} h_g\left( q_g \right) \right) \\
=
\sum_{k=1}^{K} p_k \left( \boldsymbol{x},\boldsymbol{L} \right) \cdot & \left[ \left(\boldsymbol{\varphi}^{(k)} (\boldsymbol{x}, \boldsymbol{L}) \right)_{\mathcal{G}} \cdot \left(\boldsymbol{\psi}^{(k)} (\boldsymbol{x}, \boldsymbol{L}) \right)_{\mathcal{G}} \right. \\ & \left. - \sum_{g\in\mathcal{G}} h_g \left( \psi^{(k)}_{g} (\boldsymbol{x}, \boldsymbol{L}) \right) \right] \\ 
\mathrm{s.t.} \ & \boldsymbol{x}_{\mathcal{G}} \in \mathcal{X}_\mathcal{G}
\end{aligned}
\end{equation}
Note that:
\begin{equation}
\begin{aligned} 
& \boldsymbol{\nabla}_{\boldsymbol{x}_{\mathcal{G} } } \left[ \left(\boldsymbol{\varphi}^{(k)} (\boldsymbol{x}, \boldsymbol{L}) \right)_{\mathcal{G}} \left(\boldsymbol{\psi}^{(k)} (\boldsymbol{x}, \boldsymbol{L}) \right)_{\mathcal{G}} \right. \\
& \left.
- \sum_{g\in\mathcal{G}} h_g \left( \psi^{(k)}_{g} (\boldsymbol{x}, \boldsymbol{L}) \right) \right] \\ 
= & \left[ \boldsymbol{\nabla}_{\boldsymbol{x}} (\boldsymbol{\varphi}^{(k)})_{\mathcal{G}} \right]_{\mathcal{G}} \left(\boldsymbol{\psi}^{(k)} \right)_{\mathcal{G}} +  \left[ \boldsymbol{\nabla}_{\boldsymbol{x}} (\boldsymbol{\psi}^{(k)})_{\mathcal{G}} \right]_{\mathcal{G}} \left(\boldsymbol{\varphi}^{(k)}  \right)_{\mathcal{G}} \\ 
& - \sum_{g\in\mathcal{G}} h'_g \left( \nabla_{\boldsymbol{x}} \psi^{(k)}_g \right)_{\mathcal{G}}
\end{aligned}
\end{equation}

\begin{table*}[t!]
	\centering
	\begin{minipage}{1.0\textwidth}
		\begin{equation} \label{equ:gradient-z}
		\begin{aligned}
		\boldsymbol{\nabla}_{\boldsymbol{x}_{\mathcal{G} } }  z  = &  \boldsymbol{\nabla}_{\boldsymbol{x}_{\mathcal{G} } }  \  \boldsymbol{p} \cdot \left[ \left(\boldsymbol{\varphi}^{(k)} (\boldsymbol{x}, \boldsymbol{L}) \right)_{\mathcal{G}} \left(\boldsymbol{\psi}^{(k)} (\boldsymbol{x}, \boldsymbol{L}) \right)_{\mathcal{G}} - \sum_{g\in\mathcal{G}} h_g \left( \psi^{(k)}_{g} (\boldsymbol{x}, \boldsymbol{L}) \right) \right]_{k=1}^{K} 
		= 
		\left(  \boldsymbol{\nabla}_{\boldsymbol{x}} \boldsymbol{p}  \right)_{\mathcal{G}}  \left[ \left(\boldsymbol{\varphi}^{(k)} \right)_{\mathcal{G}} \left(\boldsymbol{\psi}^{(k)} \right)_{\mathcal{G}} - \sum_{g\in\mathcal{G}} h_g \left( \psi^{(k)}_{g} \right) \right]_{k=1}^{K} \\
		& + 
		\left( \left[ \boldsymbol{\nabla}_{\boldsymbol{x}} (\boldsymbol{\varphi}^{(k)})_{\mathcal{G}} \right]_{\mathcal{G}} \left(\boldsymbol{\psi}^{(k)} \right)_{\mathcal{G}} + \left[ \boldsymbol{\nabla}_{\boldsymbol{x}} (\boldsymbol{\psi}^{(k)})_{\mathcal{G}} \right]_{\mathcal{G}}  \left(\boldsymbol{\varphi}^{(k)}  \right)_{\mathcal{G}}   - \sum_{g \in \mathcal{G}} h'_g \left[ \nabla_{\boldsymbol{x}} \psi^{(k)}_g \right]_{\mathcal{G}} \right) \boldsymbol{p}
		\end{aligned}
		\end{equation}
		\medskip
		\hrule
	\end{minipage}
\end{table*}

Thus, the gradient of the GenCo's objective function can be shown as~(\ref{equ:gradient-z}). An important item in (\ref{equ:gradient-z}) is the gradient of the probability output of the multi-class classifier $  \left( \boldsymbol{\nabla}_{\boldsymbol{x}} \boldsymbol{p}  \right)_{\mathcal{G}}  $, which defines the market power of the GenCo on altering the binding constraints of the market under certain operation conditions. The calculation of $ \left( \boldsymbol{\nabla}_{\boldsymbol{x}} \boldsymbol{p} \right)_{\mathcal{G}} $ is given in the Appendix~\ref{appendix}. 

Once the gradient is obtained, the problem can be solved by gradient descent. In this paper, we use fixed learning rate in gradient descent:
\begin{equation}
\label{equ:gradient-descent}
\boldsymbol{x}_{\mathcal{G}}^{(i+1)} := \boldsymbol{x}_{\mathcal{G}}^{(i)} + \eta \cdot \frac{\boldsymbol{\nabla}_{\boldsymbol{x}_\mathcal{G}} z}{|| \boldsymbol{\nabla}_{\boldsymbol{x}_\mathcal{G}} z ||_2 }
\end{equation}
where $ ||\cdot||_2 $ is $ \ell_2 $ norm.

\begin{figure}[t]
	\centering
	\includegraphics[width=0.75\linewidth]{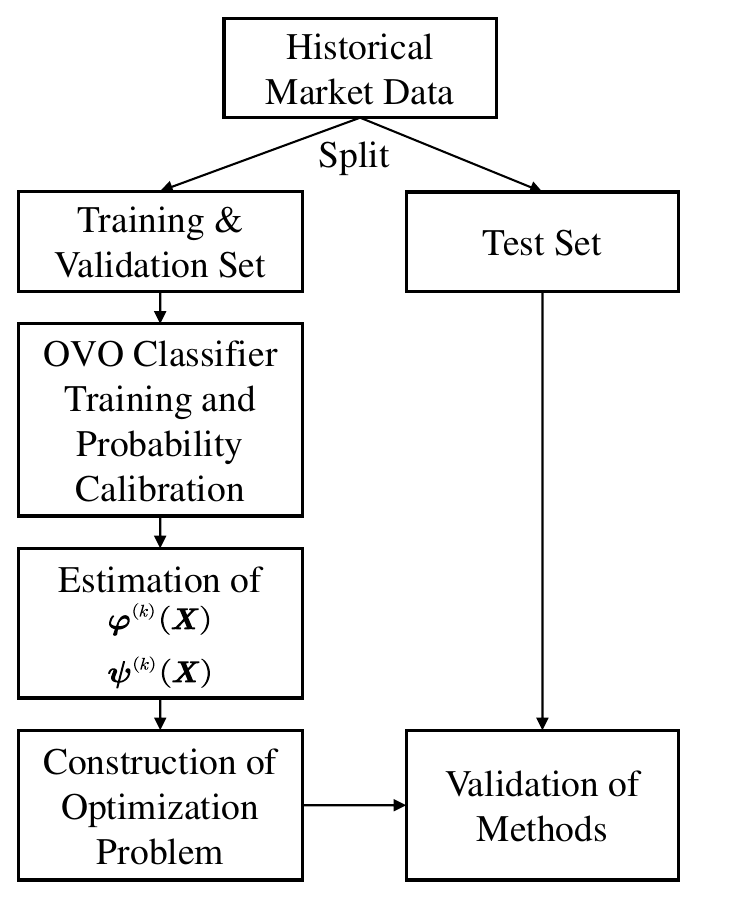}
	\caption{Framework of the proposed method.}
	\label{fig:framework}
\end{figure}

\subsection{Discussion on Available Information}
\label{subsec:information}

The method proposed in Section~\ref{sec:learning} and Section~\ref{subsec:optimization} are based on the assumption that $ \boldsymbol{x}^{(t)} $ and $ \boldsymbol{L}^{(t)} $ are available to the GenCo. However, the level of market information disclosure depends on specific rules that differ in different markets. Thus, we consider the following levels of available information:
\begin{enumerate}[I.]
	\item Perfect information. All market data and the transmission parameters are available. Under perfect information, the GenCo can construct the MPEC model to evaluate its market power and theoretically achieve maximum revenue. 
	\item Imperfect information. The transmission parameters are unavailable. All market data are available. 
	\item Imperfect information. The transmission parameters are unavailable. Data of $ \boldsymbol{L}^{(t)} $ are available. The GenCo only knows its own bidding data. Thus, only part of $ \boldsymbol{x}^{(t)} $ is available. 
	\item Imperfect information. The transmission parameters are unavailable. Only zonal load data (rather than nodal load data) and GenCo's own bidding data are available.
\end{enumerate}

In levels III and IV, the original feature vector $ \boldsymbol{X}^{(t)} $ is unavailable. A new vector can be constructed based on the available information of the GenCo. The learning of system patterns and the GenGo's optimization problem are performed with the new vector. The procedure is unchanged, but the accuracy of learning is likely to be lower since incomplete features result in a higher degree of CR overlapping. In levels III and IV, the new feature vector is also denoted using $ \boldsymbol{X}^{(t)} $ for the rest of this paper.

\subsection{The Overall Framework}

The framework of the proposed method is shown in Fig.~\ref{fig:framework}. 
The whole set of market data is split into the training \& validation set and test set. Since SVM is sensitive to the feature scale, a Min-Max scaler is used to preprocess the features. The classifier and its hyperparameters are trained and tuned on the training \& validation set. Also, the coefficients in the parametric functions $ \boldsymbol{\varphi}^{(k)} $ and $ \boldsymbol{\psi}^{(k)} $ for each system pattern are estimated using linear regression. Then, the optimal bidding problem based on the learned system patterns is constructed and solved, and the test set is used to validate the proposed method.

\subsection{Complexity Analysis}

The training of OVO SVM and its probabilistic calibration are well developed in machine learning software. Once the model is trained, it can be used for a certain time period. The time complexity for solving the optimization problem using gradient descent mainly comes from the calculation of the gradient, specifically $ \left( \boldsymbol{\nabla}_{\boldsymbol{x}} \boldsymbol{p} \right)_{\mathcal{G}} $. The main computational burden comes from the calculation of $ \boldsymbol{M} $ in (\ref{equ:delta_p}) and the updating of $ \boldsymbol{p} $ and $ \boldsymbol{Q} $ in (\ref{equ:solve-p})-(\ref{equ:solve-Q}). The calculation of $ \boldsymbol{M} $ requires a matrix inversion operation on $ \boldsymbol{Q} $ with a time complexity of $ O(K^3) $. The complexity of updating $ \boldsymbol{q} $ is also $ O(K^3) $ using Gaussian elimination on (\ref{equ:kkt-pQ}). The complexity of updating $ \boldsymbol{Q} $ is $ O(N^2) $, where $ N $ defines the length of the feature vector $ \boldsymbol{X} $.

\section{Case Study}
\label{sec:case}
To illustrate the application of the data-driven pool strategy, a case study based on simulated market data is conducted.

We perform the case study using the IEEE 30-bus system, Illinois synthetic 200-bus system, and South Carolina synthetic 500-bus system~\cite{birchfield2017grid} whose data are provided in Matpower~\cite{matpower}. In the 30-bus system, all 6 generators are assumed to be strategic. In the 200-bus system, generators \#26, \#27, \#28, \#29, \#30, and \#37 are assumed to be strategic, while the other 43 generators that have small capacity or low generation cost are assumed to be nonstrategic. In the 500-bus system, 14 generators with generation capacity of more than \SI{100}{MW} and relatively higher generation cost are set as strategic, while other 42 generators are set as nonstrategic. It is assumed that generators can price away freely from their true cost.

All numerical experiments are conducted on an Intel Core i9-10900K@3.70 GHz desktop with MATLAB 2020b. SVM is trained using LIBSVM~\cite{libsvm}.

For each test system, both the block form and the quadratic form of bid curves as well as both lossless and lossy OPF models are considered. For the block form, the number of price blocks $ B $ is set as 5, including 2 nonstrategic blocks (which are set at a sufficiently low price to ensure a minimum income) for the 30-bus system case and 1 nonstrategic block for the 200-bus system and 500-bus system. Each block has the same volume in the basic scenario. For each case (combination of different systems and block/quadratic and lossless/lossy), 8\,760 scenarios are generated by adding 10\% Gaussian deviation on $ \boldsymbol{c}_{i,b} $ and $ \boldsymbol{Q}_{i,b}^{U} $ or $ \boldsymbol{b}_{i} $ of the basic scenario. The nodal load scenarios are generated using the one-year hourly load scenarios of the Illinois synthetic 200-bus system contained in the Matpower scenario file.

The whole dataset is split into two sets: the training \& validation set (80\%) and the test set (20\%). The training and validation procedure is performed in a 5-fold cross validation approach. The best hyperparameter $ C $ in SVM is chosen from $ \{0.1,1,10,10,100,10^3, 10^4\}$. The learning rate in (\ref{equ:gradient-descent}) is fixed at 0.01, and the maximum number of iterations is set at 200.

\subsection{Comparisons}

Different levels of information are tested as mentioned in Section~\ref{subsec:information}. 
Three methods are used as benchmark for comparison. 
The first one is MPEC with information level I. It is solved using the big-M method in MILP with Gurobi. 
The second one is an RDC-like method. Assuming that the congestion status is fixed, the sensitivity of market results with respect to generator bids is estimated using historical data, and is embedded into the bidding problem.
The third method uses the same information as level II but replaces the CR's probability $ p_k(\boldsymbol{x},\boldsymbol{L}) $ in (\ref{equ:max-revenue}) with its empirical value $ \bar{p}_k $. This means that the impact of the GenCo's bid on the market results of system patterns is ignored. For the sake of simplicity, the third method using the empirical probability is further referred to as level V.

\subsection{Classification Results}

\begin{table}[!t]
	\renewcommand{\arraystretch}{1.3}
	\caption{Classification Accuracy (\%) Under Different Case Settings}
	\label{table:classification}
	\centering
	\begin{tabu}{ccccccc}
		\toprule
		\makecell{Case\\size} & \makecell{Bid curve\\form} & \makecell{OPF\\type} & \makecell{Dummy\\1} & \makecell{Dummy\\2} & \makecell{SVM\\Train} & \makecell{SVM\\Test} \\ \midrule
		30 & Block & Lossless & 7.73 & 3.34 & 99.50 & 60.51 \\ 
		30 & Block & Lossy & 8.15 & 3.34 & 95.41 & 61.66 \\ 
		30 & Quadratic & Lossless & 54.53 & 32.86 & 98.48 & 91.70 \\ 
		30 & Quadratic & Lossy & 47.26 & 25.91 & 98.75 & 92.34 \\ \hline
		200 & Block & Lossless & 17.81 & 6.65 & 76.26 & 59.43 \\ 
		200 & Block & Lossy & 26.34 & 10.41 & 61.89 & 55.99 \\ 
		200 & Quadratic & Lossless & 38.27 & 17.12 & 88.46 & 75.02 \\ 
		200 & Quadratic & Lossy & 46.67 & 24.01 & 89.39 & 72.43 \\ \hline
		500 & Block & Lossless & 8.66 & 3.46 & 98.56 & 46.57 \\ 
		500 & Block & Lossy & 9.97 & 3.61 & 99.05 & 42.19 \\ 
		500 & Quadratic & Lossless & 14.66 & 7.09 & 94.39 & 69.52 \\ 
		500 & Quadratic & Lossy & 14.98 & 6.48 & 92.86 & 66.87 \\ \bottomrule
	\end{tabu}
\end{table}

\begin{figure}[!t]
	\centering
	\includegraphics[width=1.0\linewidth]{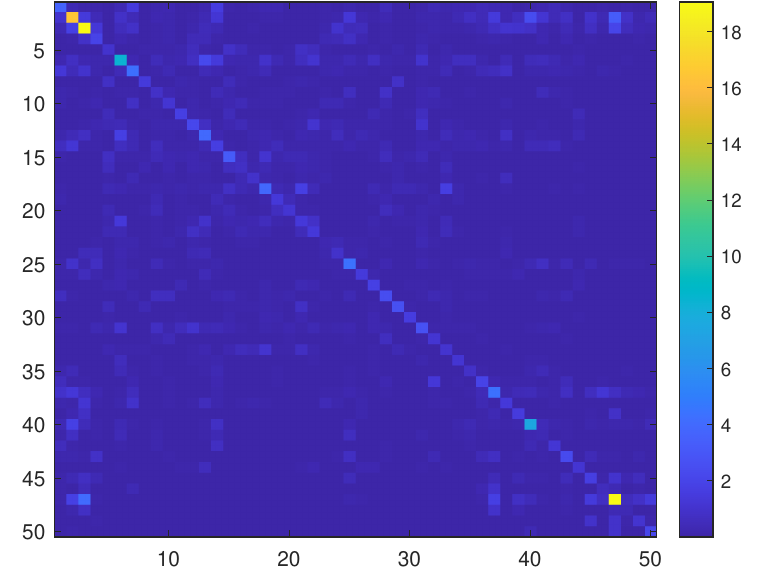}
	\caption{Confusion matrix for the 30-bus block-form lossless case.}
	\label{fig:confusion-matrix}
\end{figure}

\begin{figure}[!t]
	\centering
	\begin{subfigure}[t]{1.0\linewidth}
		\centering\includegraphics[width=1.0\linewidth]{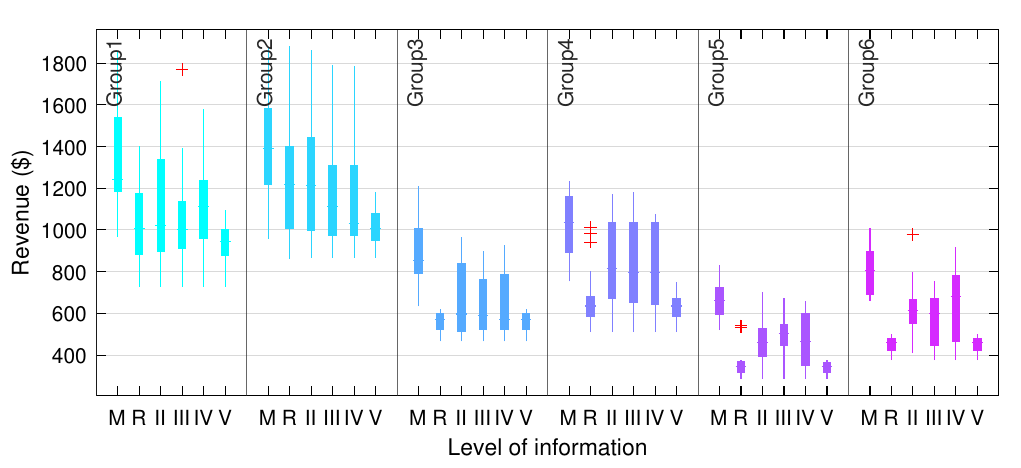}
		\caption{Lossless OPF with block-form bidding curve.}
		\label{subfig:boxplot-case30-stepwise-lossless}
	\end{subfigure}
	\\
	\begin{subfigure}[t]{1.0\linewidth}
		\centering\includegraphics[width=1.0\linewidth]{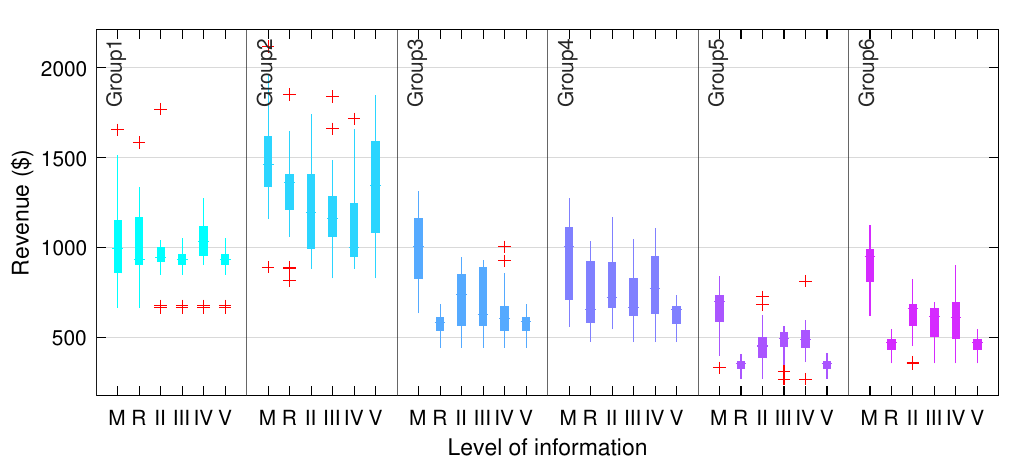}
		\caption{Lossy OPF with block-form bidding curve.}
		\label{subfig:boxplot-case30-stepwise-lossy}
	\end{subfigure}
	\\
	\begin{subfigure}[t]{1.0\linewidth}
		\centering\includegraphics[width=1.0\linewidth]{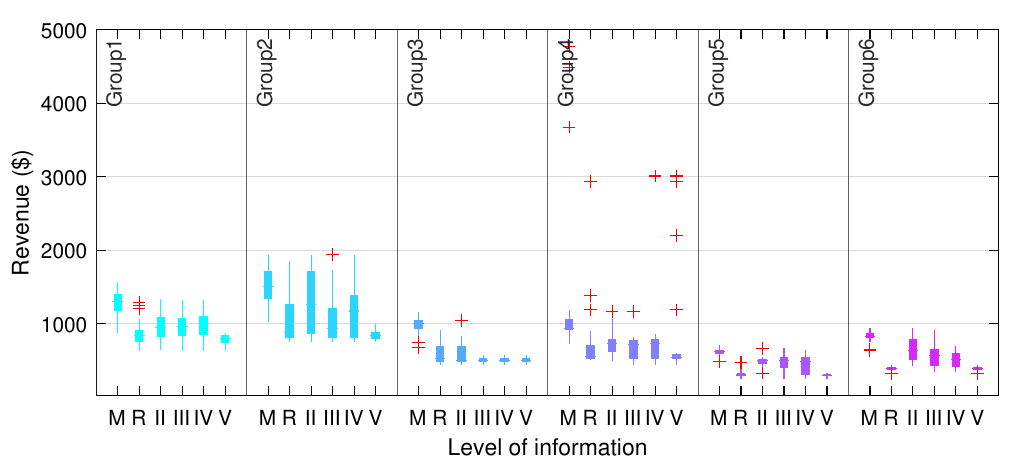}
		\caption{Lossless OPF with quadratic-form bidding curve.}
		\label{subfig:boxplot-case30-quadratic-lossless}
	\end{subfigure}
	\\
	\begin{subfigure}[t]{1.0\linewidth}
		\centering\includegraphics[width=1.0\linewidth]{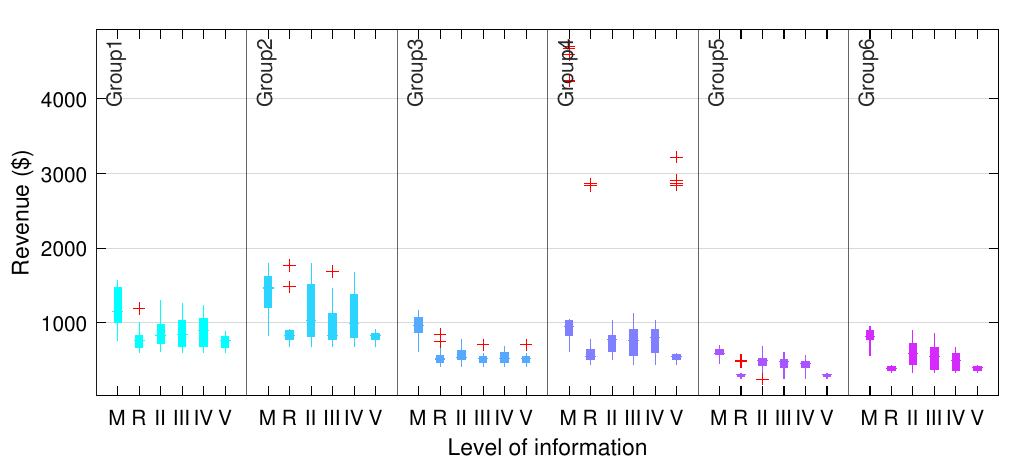}
		\caption{Lossy OPF with quadratic-form bidding curve.}
		\label{subfig:boxplot-case30-quadratic-lossy}
	\end{subfigure}
	\caption*{\small Axis: M denotes the MPEC method; R denotes the RDC-like method.}
	\caption{Boxplots of revenues for different generators in the IEEE 30-bus system. Six groups represent the GenCos owning one of the six strategic generators, respectively.}
	\label{fig:boxplot-case30}
\end{figure}

The performance of SVM in the multi-class classification task is presented. In comparison, two types of dummy classifiers are also used: The first one (Dummy 1) always predicts the most frequent class, and the second one (Dummy 2) is a random guess according to empirical probability. 

Table~\ref{table:classification} shows the classification accuracy under different case settings. The accuracy is defined as the proportion of correctly classified samples in the whole set. The 50 most frequent CRs are chosen while other CRs are considered rare and filtered. As shown in the table, the SVM outperforms the dummy classifiers with much higher accuracies in all case settings. 
For the block form, the CRs become more diversified so that the SVM test accuracy is between 42\% and 62\%. For the quadratic form, the test accuracy is between 66\% and 93\%.
In the 30-bus case, the test accuracy of lossy OPF is slightly higher than the accuracy of lossless OPF. This is because in the lossless case the 50 most frequent CRs cover more data than in the lossy case. Also, the overlap of CRs is not significant in this case.

It is also shown that the classifiers in larger cases perform somehow worse than in the 30-bus cases. This is because the overlaps between CRs become stronger when the power system is more complex. 

The above accuracy is used for deterministic classification problems. 
A well-known metric for probabilistic classification is the confusion matrix, which in the multi-class ($ K $-class) case is defined as follows:
\begin{equation}
\left[\mathrm{Conf}\right]_i = \left[ \sum_{j: y^{(j)} = i} \boldsymbol{p}^{(j)} \right]_i, \ i = 1,2,\cdots,K
\end{equation}
The diagonal dominance level of the confusion matrix represents the performance of the probabilistic classifier. Fig.~\ref{fig:confusion-matrix} shows the confusion matrix for the 30-bus block-form lossless case with a  strong dominance.

\subsection{Bid Results in IEEE 30-Bus System}

Since analyzing the cost of generators is not the focus of this paper, the cost functions for different generators $ h_g(q_g) $ in (\ref{equ:max-revenue}) are set uniformly at a relatively low level:
\begin{equation} \label{equ:cost_function}
h_g(q_g) = \frac{1}{2}\times 0.1 q_g^2 + 5 q_g
\end{equation}

By solving the revenue maximization problem according to different levels of information, 20 time intervals from the 10 most frequent CRs are used to test the performance. The boxplots of the revenues in different cases are plotted in Fig.~\ref{fig:boxplot-case30}.  As shown in the figure, the proposed method outperforms the benchmark using empirical probability (V) in all levels of information. Note that although MPEC (level I) can obtain the theoretically maximum revenue, the big-M method may induce certain loss w.r.t. selection of the $ M $ value. Instead of carefully choosing $ M $ for every single time interval and every GenCo, we fix $ M $ at 1\,000 in this case. 
The RDC-like method (R) performs slightly better than the benchmark using empirical probability (V), because it assumes the CR is known and fixed to the generator. However, the proposed method outperforms the RDC-like method in most of the cases.

Table~\ref{table:case30-revenue} shows the average revenue values of all six generators in the test set for the 30-bus case. The percentage values representing the proportion of revenue obtained under imperfect information to the theoretical revenue under perfect information by MPEC are also given. 
It is shown that the proposed method achieves on average approximately 70\%-76\% of the maximum revenue at information level II-IV, while the benchmark considering only the empirical probability achieves approximately 62\% of the maximum revenue. The absolute improvement of the proposed method is approximately 8\%-14\%.

\begin{table*}[!t]
	\renewcommand{\arraystretch}{1.3}
	\caption{Average Revenue and Percentage Values for the 30-Bus Case}
	\label{table:case30-revenue}
	\centering
	\begin{tabular}{@{\extracolsep{2pt}}cccccccccccccc}
		\toprule
		\multirow[b]{2}{*}{\makecell{Bid curve\\form}} & \multirow[b]{2}{*}{\makecell{OPF\\type}} & \multicolumn{2}{c}{MPEC} & \multicolumn{2}{c}{RDC-like} & \multicolumn{2}{c}{Level II} & \multicolumn{2}{c}{Level III} & \multicolumn{2}{c}{Level IV} & \multicolumn{2}{c}{Level V} \\ \cline{3-4} \cline{5-6} \cline{7-8} \cline{9-10} \cline{11-12} \cline{13-14}
		& & \thead{Revenue\\(\$)} & \thead{Pct.\\(\%)} & \thead{Revenue\\(\$)} & \thead{Pct.\\(\%)} & \thead{Revenue\\(\$)} & \thead{Pct.\\(\%)} & \thead{Revenue\\(\$)} & \thead{Pct.\\(\%)}  & \thead{Revenue\\(\$)} & \thead{Pct.\\(\%)} & \thead{Revenue\\(\$)} & \thead{Pct.\\(\%)} \\ \midrule
		Block & Lossless & 1024.74 & 100 & 718.28 & 70.09 & 820.55 & 80.07 & 793.28 & 77.41 & 802.03 & 78.27 & 655.10 & 63.93 \\
		Block & Lossy & 1000.53 & 100 & 735.66 & 73.53 & 794.35 & 79.39 & 760.09 & 75.97 & 777.51 & 77.71 & 708.62 & 70.82 \\ 
		Quadratic & Lossless & 1067.71 & 100 & 662.70 & 59.71 & 804.99 & 75.39 & 704.71 & 66.00 & 720.82 & 67.51 & 602.85 & 56.46 \\ 
		Quadratic & Lossy & 1121.66 & 100 & 613.36 & 56.43 & 772.16 & 68.84 & 712.73 & 63.54 & 717.38 & 63.96 & 635.12 & 56.62 \\ \hline
		\multicolumn{2}{c}{Average} & 1053.66 & 100 &  682.50 &  64.77  &  798.01 &	75.74	& 742.70 & 70.49 & 754.44	& 71.60 & 650.42 & 61.73 \\ \bottomrule	
	\end{tabular}
\end{table*}

\begin{table*}[!t]
	\renewcommand{\arraystretch}{1.3}
	\caption{Average Revenue and Percentage Values for the 200-Bus Case}
	\label{table:case200-revenue}
	\centering
	\begin{tabular}{@{\extracolsep{2pt}}cccccccccccccc}
		\toprule
		\multirow[b]{2}{*}{\makecell{Bid curve\\form}} & \multirow[b]{2}{*}{\makecell{OPF\\type}} & \multicolumn{2}{c}{MPEC} & \multicolumn{2}{c}{RDC-like} & \multicolumn{2}{c}{Level II} & \multicolumn{2}{c}{Level III} & \multicolumn{2}{c}{Level IV} & \multicolumn{2}{c}{Level V} \\ \cline{3-4} \cline{5-6} \cline{7-8} \cline{9-10} \cline{11-12} \cline{13-14}
		& & \thead{Revenue\\(\$)} & \thead{Pct.\\(\%)} & \thead{Revenue\\(\$)} & \thead{Pct.\\(\%)} & \thead{Revenue\\(\$)} & \thead{Pct.\\(\%)} & \thead{Revenue\\(\$)} & \thead{Pct.\\(\%)}  & \thead{Revenue\\(\$)} & \thead{Pct.\\(\%)} & \thead{Revenue\\(\$)} & \thead{Pct.\\(\%)} \\ \midrule
		Block & Lossless & 3837.30 & 100 & 1793.23 & 46.73 & 2172.02 & 56.6 & 2078.83 & 54.17 & 2121.78 & 55.29 & 1793.23 & 46.73 \\
		Block & Lossy & 3245.22 & 100 & 1711.78 & 52.75 & 1769.69 & 54.53 & 1740.13 & 53.62 & 1835.13 & 56.55 & 1714.98 & 52.85 \\ 
		Quadratic & Lossless & 3900.77 & 100 & 2361.82 & 60.55 & 2643.75 & 67.78 & 2636.15 & 67.58 & 2337.74 & 59.93 & 1994.27 & 51.12  \\ 
		Quadratic & Lossy & 3273.78 & 100 & 2044.77 & 62.46 & 2532.62 & 77.36 & 2486.67 & 75.96 & 2233.58 & 68.23 & 2227.42 & 68.04 \\ \hline
		\multicolumn{2}{c}{Average} & 3564.27 & 100 & 1977.9 & 55.49 & 2279.52 & 63.95 & 2235.45 & 62.72 & 2132.06 & 59.82 & 1932.48 & 54.22		
		\\ \bottomrule	
	\end{tabular}
\end{table*}

\begin{figure}[!t]
	\centering
	\includegraphics[width=0.9\linewidth]{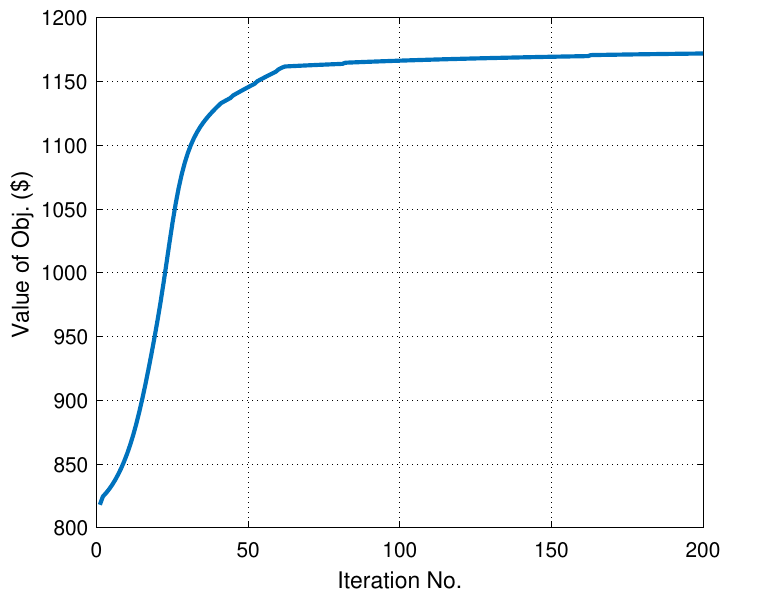}
	\caption{An example on the change of objective function values in gradient descent.}
	\label{fig:iteration-case30}
\end{figure}

Fig.~\ref{fig:iteration-case30} gives an example of the change in objective function values during the gradient descent procedure. Usually, the proposed method converges within 100 iterations. The time consumption for a 200-iteration procedure is approximately 45 seconds.

\subsection{Bid Results in Illinois 200-Bus System}

The cost functions are set the same as (\ref{equ:cost_function}) in the 200-bus system case. Table~\ref{table:case200-revenue} shows the average revenue values of the six strategic generators in the test set for the 200-bus case. The proposed method achieves on average approximately 59\%-65\% of the maximum revenue at information level II-IV, while the benchmark method can only achieve approximately 54\% of the maximum revenue. Thus, the improvement of the proposed method under imperfect information is approximately 5\%-11\%.

Compared with the results in the 30-bus case, the improvement becomes less when the system becomes larger. In the 200-bus case, there are many nonstrategic generators and the market power of each strategic generator is smaller. Thus, information extracted from historical market data decreases, and the performance of the proposed method is therefore negatively affected. 

The time consumption for a 200-iteration procedure in this case is approximately 120 seconds.

\subsection{Bid Results in South Carolina 500-Bus System}

The cost functions are set the same as (\ref{equ:cost_function}). Table~\ref{table:case500-revenue} shows the average revenue values of the two generators at bus \#16 and \#17. The results of level IV are not presented, because the Matpower case file only defines two regions for the 500-bus system. If the zonal load data are used to replace the nodal load data in this case, most  information contained in the nodal load vector will be lost.

The main difference between the 500-bus cases and the smaller cases is that the market becomes more competitive when the system becomes larger and the number of generators increases. Thus, even if MPEC is used to calculate the optimal generation offer under perfect information, it cannot bring a significant extra revenue compared with other methods under imperfect information. The performance of MPEC is slightly worse than some other methods for the lossy OPF and block form of bid curves, since it is difficult to consider the loss factors in the lower level model of MPEC. In fact, MPEC only uses a lossless OPF which cannot fully simulate the lossy OPF. 

Although the market power of the generators are smaller compared with the 30-bus and 200-bus cases, the proposed method with information levels II and III still performs better than the benchmark methods. 

\begin{table*}[!t]
	\renewcommand{\arraystretch}{1.3}
	\caption{Average Revenue and Percentage Values for the 500-Bus Case}
	\label{table:case500-revenue}
	\centering
	\begin{tabular}{@{\extracolsep{2pt}}cccccccccccc}
		\toprule
		\multirow[b]{2}{*}{\makecell{Bid curve\\form}} & \multirow[b]{2}{*}{\makecell{OPF\\type}} & \multicolumn{2}{c}{MPEC} & \multicolumn{2}{c}{RDC-like} & \multicolumn{2}{c}{Level II} & \multicolumn{2}{c}{Level III} & \multicolumn{2}{c}{Level V} \\ \cline{3-4} \cline{5-6} \cline{7-8} \cline{9-10} \cline{11-12} 
		& & \thead{Revenue\\(\$)} & \thead{Pct.\\(\%)} & \thead{Revenue\\(\$)} & \thead{Pct.\\(\%)} & \thead{Revenue\\(\$)} & \thead{Pct.\\(\%)} & \thead{Revenue\\(\$)} & \thead{Pct.\\(\%)}  & \thead{Revenue\\(\$)} & \thead{Pct.\\(\%)} \\ \midrule
		Block & Lossless & 24711.5 & 100 & 23393.2 & 94.67 & 23785.7 & 96.25 & 23891.9 & 96.68 & 22883.2 & 92.60 \\
		Block & Lossy & 25389.2 & 100 & 26378.7 & 103.9 & 26158.1 & 103.0 & 26397.0 & 104.0 & 25166.1 & 99.12 \\ 
		Quadratic & Lossless & 13227.9 & 100 & 11787.2 & 89.11 & 12095.0 & 91.44 & 11885.1 & 89.85 & 11740.3 & 88.75  \\ 
		Quadratic & Lossy & 14564.4 & 100 & 12723.5 & 87.36 & 13083.9 & 89.83 & 13062.2 & 89.69 & 12737.4 & 87.46 \\ \hline
		\multicolumn{2}{c}{Average} & 19473.3 & 100 & 18570.7 & 95.36 & 18780.7 & 96.44 & 18809.1 & 96.60 & 18131.8 & 93.11
		\\ \bottomrule	
	\end{tabular}
\end{table*}

\subsection{Discussion on Scalability to Larger Systems}

Generally, as the power system become larger to thousands of nodes, the electricity market would become more competitive. It is difficult for GenCos to seek extra revenue in a perfect competitive market. From the case study, it is shown that the proposed method could extract useful information from market data and support bidding decision in a moderate-sized power system. If the CR is not complex in a large system, the performance of the proposed method is expected to be similar to the performance in moderated-sized cases. If the CR is too complex, other strategies (\textit{e.g.}, RDC-based methods) are preferred.

\section{Conclusion}
\label{sec:conclusion}
This paper proposes a data-driven pool strategy for price-makers under imperfect information. Based on general OPF models, the critical region and system pattern characteristics of the market clearing model are analyzed using rim-MPLP theory. Then, such characteristics are used for the training of multi-class classifier with historical market data. The parametric functions of optimal dual and primal variables in OPF are fitted with linear regression. The learned classifier and parametric functions are integrated into the price-makers decision framework to form a revenue maximization problem. We then show that such problem can be solved using gradient descent-based methods. The proposed method can consider the impact of GenCo's bidding on both the market binding constraints and market clearing prices. It can be regarded as an extension of the famous MPEC method under imperfect information when essential power system parameters are unavailable. The proposed method can be used for both market participants and market organizers in decision-making, behavior evaluation and simulation.
A case study based on the IEEE 30-bus system, Illinois 200-bus system, and South Carolina 500-bus system demonstrates the performance of the proposed method in small and moderate-sized systems.

Future work includes considering transmission network contingency in the method.


\appendices
\section{Gradient of the Probabilities}
\label{appendix}

Assume an infinitesimal $ \Delta \boldsymbol{X} = [\Delta X_1, \Delta X_2,\cdots ]^\top $ is added to $ \boldsymbol{X} $. For SVM, we have:
\begin{equation} \label{equ:delta-decision}
\Delta f_{ij} = \boldsymbol{w}_{ij}^\top \Delta \boldsymbol{X}, \ i<j
\end{equation}
And when $ i<j $:
\begin{equation}
\begin{aligned}
\Delta r_{ij} & = - \frac{1}{(1+ e^{A_{ij} f_{ij} + B_{ij} })^2 } e^{A_{ij} f_{ij} + B_{ij} } A_{ij} \Delta f_{ij} \\
& = - A_{ij} r_{ij}^2 (1/r_{ij} -1 ) \boldsymbol{w}_{ij}^\top \Delta \boldsymbol{X} \\
& = - A_{ij} r_{ij} (1 - r_{ij} ) \boldsymbol{w}_{ij}^\top \Delta \boldsymbol{X}
\end{aligned}
\end{equation}
When $ i>j $:
\begin{equation}
\begin{aligned}
\Delta r_{ij} & = - \Delta r_{ji} = A_{ji} r_{ji} (1 - r_{ji} ) \boldsymbol{w}_{ji}^\top \Delta \boldsymbol{X} \\
& = A_{ij} r_{ij} (1-r_{ij} ) \boldsymbol{w}_{ij}^\top \Delta \boldsymbol{X}
\end{aligned}
\end{equation}
With slight abuse of notation, $ A_{ij} = A_{ji} $ and $ \boldsymbol{w}_{ij} = \boldsymbol{w}_{ji} $ for $ i\neq j $. 
Then, to calculate $ \Delta \boldsymbol{Q} $, for $ i = j $:
\begin{equation}
\begin{aligned}
\Delta Q_{ij} & = \sum_{s: s\neq i} 2r_{si} \Delta r_{si} \\
& = \left( \sum_{s: s\neq i} A_{si}(-1)^{\mathds{1}(s<i)} 2 r_{si}^2 (1-r_{si}) \boldsymbol{w}_{si}^\top \right) \Delta \boldsymbol{X}
\end{aligned}
\end{equation}
Where $ \mathds{1}(\cdot) $ is the boolean function for the input condition. For $ i < j $:
\begin{equation}
\begin{aligned}
\Delta Q_{ij} & = \Delta (r_{ij}^2 - r_{ij}) \\ 
&= -A_{ij} (2r_{ij} - 1) r_{ij} (1 - r_{ij}) \boldsymbol{w}_{ij}^\top \Delta \boldsymbol{X} \\
\end{aligned}
\end{equation}
Since $ \boldsymbol{Q} $ is symmetric, for $ i >j $:
\begin{equation}
\begin{aligned}
\Delta Q_{ij} & = \Delta Q_{ji} = -A_{ji} (2r_{ji} - 1) r_{ji} (1 - r_{ji}) \boldsymbol{w}_{ji}^\top \Delta \boldsymbol{X} \\
& = A_{ij} (2r_{ij}-1) r_{ij} (1-r_{ij}) \boldsymbol{w}_{ij}^\top \Delta \boldsymbol{X}
\end{aligned}
\end{equation}
Now we have $ \Delta \boldsymbol{Q} $ w.r.t. $ \Delta \boldsymbol{X} $. From (\ref{equ:kkt-pQ}) we have:
\begin{equation}
\begin{gathered}
(\Delta \boldsymbol{Q}) \boldsymbol{p} + \boldsymbol{Q} \Delta \boldsymbol{p} + \boldsymbol{e} \Delta \alpha = 0 \\
\boldsymbol{e}^\top \Delta \boldsymbol{p} = 0 
\end{gathered}
\end{equation}
Since $ \boldsymbol{Q} $ is positive definite:
\begin{equation} \label{equ:delta_p}
\begin{aligned}
& \boldsymbol{e}^\top \boldsymbol{Q}^{-1} \left[ (\Delta \boldsymbol{Q}) \boldsymbol{p} + \boldsymbol{e} \Delta \alpha \right] = 0 \\
\Rightarrow \ & \Delta \alpha = \left( \boldsymbol{e}^\top \boldsymbol{Q}^{-1} \boldsymbol{e} \right)^{-1} \left[ \boldsymbol{e}^\top \boldsymbol{Q}^{-1} (\Delta \boldsymbol{Q}) \boldsymbol{p} \right] \\
\Rightarrow \ & \Delta \boldsymbol{p} = \boldsymbol{Q}^{-1} \left[ \boldsymbol{e} \left( \boldsymbol{e}^{\top} \boldsymbol{Q}^{-1} \boldsymbol{e} \right)^{-1} \boldsymbol{e}^\top \boldsymbol{Q}^{-1} - \boldsymbol{I} \right] (\Delta \boldsymbol{Q}) \boldsymbol{p}
\end{aligned}
\end{equation}

\begin{table*}[!t]
	\centering
	\begin{minipage}{0.75\textwidth}
		\begin{equation} \label{equ:deltaQ}
		\Delta Q_{ij} = \begin{cases}
		\sum_{s: s\neq i} A_{si}(-1)^{\mathds{1}(s<i)} 2 r_{si}^2 (1-r_{si}) w_{si,k} \Delta X_k & \mathrm{if}\ i = j, \\
		A_{ij}(-1)^{\mathds{1}(i<j)}  (2r_{ij} - 1) r_{ij} (1 - r_{ij}) w_{ij,k} \Delta X_k & \mathrm{if}\ i \neq j.
		\end{cases}
		\end{equation}
		\begin{equation} \label{equ:Dk}
		\left[\boldsymbol{D}_k\right]_{ij} = \begin{cases}
		\sum_{s: s\neq i} A_{si}(-1)^{\mathds{1}(s<i)} 2 r_{si}^2 (1-r_{si}) w_{si,k} & \mathrm{if}\ i = j,  \\ 
		A_{ij}(-1)^{\mathds{1}(i<j)}  (2r_{ij} - 1) r_{ij} (1 - r_{ij}) w_{ij,k} & \mathrm{if}\ i \neq j.
		\end{cases}
		\end{equation}
		\medskip
		\hrule
	\end{minipage}
\end{table*}

Let $ \boldsymbol{M} = \boldsymbol{Q}^{-1} \left[ \boldsymbol{e} \left( \boldsymbol{e}^{\top} \boldsymbol{Q}^{-1} \boldsymbol{e} \right)^{-1} \boldsymbol{e}^\top \boldsymbol{Q}^{-1} - \boldsymbol{I} \right] $. To further simplify the gradient of $ \boldsymbol{p} $ w.r.t. $ \boldsymbol{X} $, we focus on the change of $ X_k $ for some index $ k $. When fixing $ X_{k'} $ for all $ k'\neq k $, $ \Delta Q_{ij} $ can be represented as~(\ref{equ:deltaQ}). 
Then, define matrix $ \boldsymbol{D}_k $ as (\ref{equ:Dk}).

Thus, $ \partial \boldsymbol{Q} / \partial X_k  = \boldsymbol{D}_K $. And:
\begin{equation}
\begin{aligned}
& \Delta \boldsymbol{p} = \boldsymbol{M} \boldsymbol{D}_k \boldsymbol{p} \Delta X_k \\
\Rightarrow & \frac{\partial \boldsymbol{p}}{\partial X_k} = \boldsymbol{M}\boldsymbol{D}_k \boldsymbol{p} 
\end{aligned}
\end{equation}
Due to linearity, the gradient is:
\begin{equation}
\boldsymbol{\nabla}_{\boldsymbol{X}} \boldsymbol{p} = \boldsymbol{M} \begin{bmatrix}
\boldsymbol{D}_1 \boldsymbol{p}, \boldsymbol{D}_2 \boldsymbol{p}, \cdots, \boldsymbol{D}_K \boldsymbol{p}
\end{bmatrix}
\end{equation}
By defination, $ \left( \boldsymbol{\nabla}_{\boldsymbol{x}} \boldsymbol{p} \right)_{\mathcal{G}} $ is part of $ \boldsymbol{\nabla}_{\boldsymbol{X}} \boldsymbol{p} $. 

Finally, the only thing that relates to SVM is (\ref{equ:delta-decision}). The above derivation holds for other classfiers as long as $ \Delta f_{ij} $ can be written w.r.t. $ \Delta \boldsymbol{X} $.


%





\ifCLASSOPTIONcaptionsoff
  \newpage
\fi



%
\bibliographystyle{IEEEtran}
\bibliography{ref}

\begin{thebibliography}{10}
\providecommand{\url}[1]{#1}
\csname url@samestyle\endcsname
\providecommand{\newblock}{\relax}
\providecommand{\bibinfo}[2]{#2}
\providecommand{\BIBentrySTDinterwordspacing}{\spaceskip=0pt\relax}
\providecommand{\BIBentryALTinterwordstretchfactor}{4}
\providecommand{\BIBentryALTinterwordspacing}{\spaceskip=\fontdimen2\font plus
\BIBentryALTinterwordstretchfactor\fontdimen3\font minus
  \fontdimen4\font\relax}
\providecommand{\BIBforeignlanguage}[2]{{%
\expandafter\ifx\csname l@#1\endcsname\relax
\typeout{** WARNING: IEEEtran.bst: No hyphenation pattern has been}%
\typeout{** loaded for the language `#1'. Using the pattern for}%
\typeout{** the default language instead.}%
\else
\language=\csname l@#1\endcsname
\fi
#2}}
\providecommand{\BIBdecl}{\relax}
\BIBdecl

\bibitem{david2000strategic}
A.~David and F.~Wen, ``Strategic bidding in competitive electricity markets: a
  literature survey,'' in \emph{2000 Power Engineering Society Summer Meeting},
  vol.~4, 2000, pp. 2168--2173.

\bibitem{LI20114686}
G.~Li, J.~Shi, and X.~Qu, ``Modeling methods for {GenCo} bidding strategy
  optimization in the liberalized electricity spot market–a state-of-the-art
  review,'' \emph{Energy}, vol.~36, no.~8, pp. 4686--4700, 2011.

\bibitem{tao2005strategic}
T.~Li and M.~Shahidehpour, ``Strategic bidding of transmission-constrained
  gencos with incomplete information,'' \emph{IEEE Transactions on Power
  Systems}, vol.~20, no.~1, pp. 437--447, 2005.

\bibitem{Litvinov2010lmp}
E.~Litvinov, ``Design and operation of the locational marginal prices-based
  electricity markets,'' \emph{IET Generation, Transmission \& Distribution},
  vol.~4, no.~2, pp. 315--323, 2010.

\bibitem{Hobbs2000strategic}
B.~Hobbs, C.~Metzler, and J.-S. Pang, ``Strategic gaming analysis for electric
  power systems: an {MPEC} approach,'' \emph{IEEE Transactions on Power
  Systems}, vol.~15, no.~2, pp. 638--645, 2000.

\bibitem{conejo2002optimal}
A.~Conejo, J.~Contreras, J.~Arroyo, and S.~de~la Torre, ``Optimal response of
  an oligopolistic generating company to a competitive pool-based electric
  power market,'' \emph{IEEE Transactions on Power Systems}, vol.~17, no.~2,
  pp. 424--430, 2002.

\bibitem{song2018price}
M.~Song and M.~Amelin, ``Price-maker bidding in day-ahead electricity market
  for a retailer with flexible demands,'' \emph{IEEE Transactions on Power
  Systems}, vol.~33, no.~2, pp. 1948--1958, 2018.

\bibitem{xu2011bidding}
L.~Xu, R.~Baldick, and Y.~Sutjandra, ``Bidding into electricity markets: A
  transmission-constrained residual demand derivative approach,'' \emph{IEEE
  Transactions on Power Systems}, vol.~26, no.~3, pp. 1380--1388, 2011.

\bibitem{portela2017residual}
J.~Portela~Gonz\'{a}lez, A.~Mu\~{n}oz San~Roque, E.~F. S\'{a}nchez-\'{U}beda,
  J.~Garc\'{i}a-Gonz\'{a}lez, and R.~Gonz\'{a}lez~Hombrados, ``Residual demand
  curves for modeling the effect of complex offering conditions on day-ahead
  electricity markets,'' \emph{IEEE Transactions on Power Systems}, vol.~32,
  no.~1, pp. 50--61, 2017.

\bibitem{ruiz2009pool}
C.~Ruiz and A.~J. Conejo, ``Pool strategy of a producer with endogenous
  formation of locational marginal prices,'' \emph{IEEE Transactions on Power
  Systems}, vol.~24, no.~4, pp. 1855--1866, 2009.

\bibitem{zugno2013pool}
M.~Zugno, J.~M. Morales, P.~Pinson, and H.~Madsen, ``Pool strategy of a
  price-maker wind power producer,'' \emph{IEEE Transactions on Power Systems},
  vol.~28, no.~3, pp. 3440--3450, 2013.

\bibitem{ruiz2014tutorial}
C.~Ruiz, A.~J. Conejo, J.~D. Fuller, S.~A. Gabriel, and B.~F. Hobbs, ``A
  tutorial review of complementarity models for decision-making in energy
  markets,'' \emph{EURO Journal on Decision Processes}, vol.~2, no. 1-2, pp.
  91--120, 2014.

\bibitem{ghamkhari2017strategic}
M.~Ghamkhari, A.~Sadeghi-Mobarakeh, and H.~Mohsenian-Rad, ``Strategic bidding
  for producers in nodal electricity markets: A convex relaxation approach,''
  \emph{IEEE Transactions on Power Systems}, vol.~32, no.~3, pp. 2324--2336,
  2017.

\bibitem{ye2016mpec}
Y.~Ye, D.~Papadaskalopoulos, and G.~Strbac, ``An {MPEC} approach for analysing
  the impact of energy storage in imperfect electricity markets,'' in
  \emph{2016 13th International Conference on the European Energy Market
  (EEM)}.\hskip 1em plus 0.5em minus 0.4em\relax IEEE, 2016, pp. 1--5.

\bibitem{sepetanc2020cluster}
K.~{\v{S}}epetanc and H.~Pand{\v{z}}i{\'c}, ``A cluster-based operation model
  of aggregated battery swapping stations,'' \emph{IEEE Transactions on Power
  Systems}, vol.~35, no.~1, pp. 249--260, 2020.

\bibitem{wen2001optimal}
F.~Wen and A.~David, ``Optimal bidding strategies and modeling of imperfect
  information among competitive generators,'' \emph{IEEE Transactions on Power
  Systems}, vol.~16, no.~1, pp. 15--21, 2001.

\bibitem{li2005strategic}
T.~Li and M.~Shahidehpour, ``Strategic bidding of transmission-constrained
  {GENCOs} with incomplete information,'' \emph{IEEE Transactions on Power
  Systems}, vol.~20, no.~1, pp. 437--447, 2005.

\bibitem{bompard2021market}
E.~Bompard, T.~Huang, and L.~Yang, ``Market equilibrium under incomplete and
  imperfect information in bilateral electricity markets,'' \emph{IEEE
  Transactions on Power Systems}, vol.~26, no.~3, pp. 1231--1240, 2011.

\bibitem{faisca2007parametric}
N.~P. Fa{\'\i}sca, V.~Dua, B.~Rustem, P.~M. Saraiva, and E.~N. Pistikopoulos,
  ``Parametric global optimisation for bilevel programming,'' \emph{Journal of
  Global Optimization}, vol.~38, no.~4, pp. 609--623, 2007.

\bibitem{bylling2020parametric}
H.~C. Bylling, S.~A. Gabriel, and T.~K. Boomsma, ``A parametric programming
  approach to bilevel optimisation with lower-level variables in the upper
  level,'' \emph{Journal of the Operational Research Society}, vol.~71, no.~5,
  pp. 846--865, 2020.

\bibitem{zhou2011short}
Q.~Zhou, L.~Tesfatsion, and C.-C. Liu, ``Short-term congestion forecasting in
  wholesale power markets,'' \emph{IEEE Transactions on Power Systems},
  vol.~26, no.~4, pp. 2185--2196, 2011.

\bibitem{geng2017learning}
X.~Geng and L.~Xie, ``Learning the {LMP}-load coupling from data: A support
  vector machine based approach,'' \emph{IEEE Transactions on Power Systems},
  vol.~32, no.~2, pp. 1127--1138, 2017.

\bibitem{zheng2021unsupervised}
K.~Zheng, Q.~Chen, Y.~Wang, C.~Kang, and L.~Xie, ``Unsupervised congestion
  status identification using {LMP} data,'' \emph{IEEE Transactions on Smart
  Grid}, vol.~12, no.~1, pp. 726--736, 2021.

\bibitem{gal2010postoptimal}
T.~Gal, \emph{Postoptimal Analyses, Parametric Programming, and Related Topics:
  degeneracy, multicriteria decision making, redundancy}, 2nd~ed.\hskip 1em
  plus 0.5em minus 0.4em\relax Berlin and New York: Walter de Gruyter, 1995.

\bibitem{litvinov2004marginal}
E.~Litvinov, T.~Zheng, G.~Rosenwald, and P.~Shamsollahi, ``Marginal loss
  modeling in {LMP} calculation,'' \emph{IEEE Transactions on Power Systems},
  vol.~19, no.~2, pp. 880--888, 2004.

\bibitem{hu2010iterative}
Z.~Hu, H.~Cheng, Z.~Yan, and F.~Li, ``An iterative {LMP} calculation method
  considering loss distributions,'' \emph{IEEE Transactions on Power Systems},
  vol.~25, no.~3, pp. 1469--1477, 2010.

\bibitem{boyd2004convex}
S.~Boyd and L.~Vandenberghe, \emph{Convex optimization}, 7th~ed.\hskip 1em plus
  0.5em minus 0.4em\relax United Kingdom: Cambridge University Press, 2004.

\bibitem{platt1999probabilistic}
J.~Platt, ``Probabilistic outputs for support vector machines and comparisons
  to regularized likelihood methods,'' in \emph{Advances in large margin
  classifiers}, A.~J. Smola, P.~Bartlett, B.~Sch\"{o}lkopf, and D.~Schuurmans,
  Eds.\hskip 1em plus 0.5em minus 0.4em\relax Cambridge, MA, USA: MIT Press,
  2000, pp. 61--74.

\bibitem{scikit-learn}
F.~Pedregosa, G.~Varoquaux, A.~Gramfort, V.~Michel, B.~Thirion, O.~Grisel,
  M.~Blondel, P.~Prettenhofer, R.~Weiss, V.~Dubourg, J.~Vanderplas, A.~Passos,
  D.~Cournapeau, M.~Brucher, M.~Perrot, and E.~Duchesnay, ``Scikit-learn:
  Machine learning in {P}ython,'' \emph{Journal of Machine Learning Research},
  vol.~12, pp. 2825--2830, 2011.

\bibitem{libsvm}
C.-C. Chang and C.-J. Lin, ``{LIBSVM}: A library for support vector machines,''
  \emph{ACM Transactions on Intelligent Systems and Technology}, vol.~2, pp.
  27:1--27:27, 2011.

\bibitem{wu2004probability}
T.-F. Wu, C.-J. Lin, and R.~C. Weng, ``Probability estimates for multi-class
  classification by pairwise coupling,'' \emph{Journal of Machine Learning
  Research}, vol.~5, p. 975–1005, 2004.

\bibitem{birchfield2017grid}
A.~B. Birchfield, T.~Xu, K.~M. Gegner, K.~S. Shetye, and T.~J. Overbye, ``Grid
  structural characteristics as validation criteria for synthetic networks,''
  \emph{IEEE Transactions on Power Systems}, vol.~32, no.~4, pp. 3258--3265,
  2017.

\bibitem{matpower}
R.~D. Zimmerman, C.~E. Murillo-Sánchez, and R.~J. Thomas, ``Matpower:
  Steady-state operations, planning, and analysis tools for power systems
  research and education,'' \emph{IEEE Transactions on Power Systems}, vol.~26,
  no.~1, pp. 12--19, 2011.

\end{thebibliography}

%

\begin{IEEEbiography}[{\includegraphics[width=1in,height=1.25in,clip,keepaspectratio]{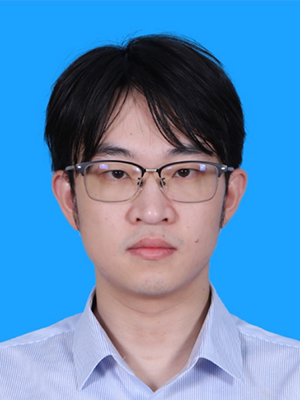}}]{Kedi Zheng} (S'17-M'22) 
	received the B.S. and Ph.D. degrees in electrical engineering from Tsinghua University, Beijing, China, in 2017 and 2022, respectively.
	
	He is currently a Post-Doctoral Researcher with Tsinghua University. His research interests include data analytics in power systems and electricity markets.
\end{IEEEbiography}

\begin{IEEEbiography}[{\includegraphics[width=1in,height=1.25in,clip,keepaspectratio]{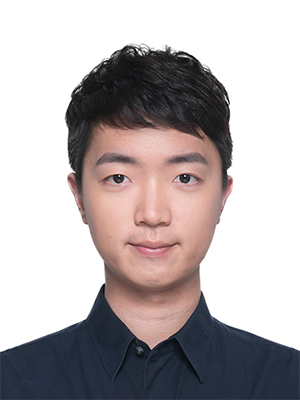}}]{Hongye Guo} (S’15-M’20)
	received the B.S. and Ph.D. degrees in electrical engineering from Tsinghua University, Beijing, China, in 2015 and 2020, respectively.
	
	He is currently a postdoc research fellow at Tsinghua University. His research interests include electricity markets, game theory, energy economics and machine learning.
\end{IEEEbiography}

\begin{IEEEbiography}[{\includegraphics[width=1in,height=1.25in,clip,keepaspectratio]{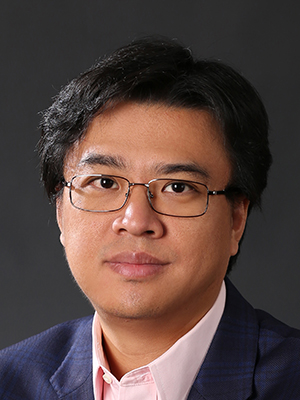}}]{Qixin Chen} (M'10-SM'15) 
	received a Ph.D. degree from the Department of Electrical Engineering at Tsinghua University, Beijing, China, in 2010.
	
	He is currently an Associate Professor at Tsinghua University. His research interests include electricity markets, power system economics and optimization, low-carbon electricity and power generation expansion planning.
\end{IEEEbiography}


%
%
%


\vfill


\end{document}